\newcommand{\beq}{\begin{equation}}
\newcommand{\eeq}{\end{equation}}
\newcommand{\beqa}{\begin{eqnarray}}
\newcommand{\eeqa}{\end{eqnarray}}
\newcommand{\bal}{\begin{align}}
\newcommand{\eal}{\end{align}}
\newcommand{\bsp}{\begin{equation}\begin{split}}
\newcommand{\esp}{\end{split}\end{equation}}
\newcommand{\bit}{\begin{enumerate}[(i)]}
\newcommand{\eit}{\end{enumerate}}
\newcommand{\ben}{\begin{enumerate}[(i)]}
\newcommand{\een}{\end{enumerate}}
\newcommand{\nn}{\nonumber}
\newcommand{\SP}[2]{\langle #1,#2 \rangle}
\newcommand{\AR}{\mathbb{R}}
\newcommand{\HR}{\mathcal{H}}
\newcommand{\id}{\mathbb{I} }
\newcommand{\tr}{\mathrm{tr}}
\newcommand{\rank}{\mathrm{rank}}
\newcommand{\Herm}{\mathrm{Herm}}
\newcommand{\data}{\mathcal{D}}
\newcommand{\Pst}{P_{\mathrm{st}}}
\newcommand{\Pm}{P_{\mathrm{m}}}
\newcommand{\curlyPst}{\mathcal{P}_{\mathrm{st}}}
\newcommand{\curlyPm}{\mathcal{P}_{\mathrm{m}}}
\newcommand{\Dsquare}{\mathcal{D}_{\square}}
\newcommand{\thetasharp}{\theta^{(\sharp)}}
\newcommand{\knowledge}{\vec{\mathcal{K}}}
\newcommand{\Sym}{\mathrm{Sym}}
\def\mc#1{\multicolumn{1}{c|}{#1}}
\def\cm#1{\multicolumn{1}{|c}{#1}}
\algnewcommand\algorithmicswitch{\textbf{switch}}
\algnewcommand\algorithmiccase{\textbf{case}}
\algnewcommand\algorithmicassert{\texttt{assert}}
\algnewcommand\Assert[1]{\State \algorithmicassert(#1)}%
\begin{document}
\title{Global Completability with Applications to Self-Consistent Quantum Tomography}
\author{Cyril Stark
}                     
\institute{Theoretische Physik, ETH Zurich, CH-8093 Zurich, Switzerland; E-mail: starkc@phys.ethz.ch}
\date{Received: date / Accepted: date}
%
\communicated{name}
\maketitle
\begin{abstract}

Let $\vec{p}_{1}, ..., \vec{p}_{N} \in \AR^D$ be unknown vectors and let $\Omega \subseteq \{1,...,N\}^{\times 2}$. Assume that the inner products $\vec{p}_{i}^T \vec{p}_{j}$ are fixed for all $(i,j) \in \Omega$. Do these inner product constraints (up to simultaneous rotation of all vectors) determine $\vec{p}_{1}, ..., \vec{p}_{N}$ uniquely? Here we derive a necessary and sufficient condition for the uniqueness of $\vec{p}_{1}, ...,\vec{p}_{N}$ (i.e., global completability) which is applicable to a large class of practically relevant sets $\Omega$. Moreover, given $\Omega$, we show that the condition for global completability is universal in the sense that for almost all vectors $\vec{p}_{1}, ...,\vec{p}_{N} \in \AR^D$ the completability of $\vec{p}_{1}, ...,\vec{p}_{N}$ only depends on $\Omega$ and not on the specific values of $\vec{p}_{i}^T \vec{p}_{j}$ for $(i,j) \in \Omega$. This work was motivated by practical considerations, namely, self-consistent quantum tomography.

\end{abstract}

%

\section{Introduction}\label{Rigidity:generic.configurations}

Assume we have built an experiment that allows us to prepare $W$ states $(\rho_{w})_{w=1}^W$ and to perform $V$ measurements $\bigl( (E_{vk})_{k=1}^K \bigr)_{v=1}^V$. Here, ``$v$'' enumerates the different measurements and ``$k$'' enumerates the outcomes of each measurements. To simplify the notation we assume that each measurement has the same number of outcomes $K$. Throughout we assume that the underlying Hilbert space is finite-dimensional and therefore, all states and each of the measurement operators are elements of the space of Hermitian matrices $\Herm(\mathbb{C}^d)$ on a $d$-dimensional Hilbert space $\mathbb{C}^d$. We equip the real $d^2$-dimensional vector space $(\Herm(\mathbb{C}^d), \SP{\cdot}{\cdot})$ with the Hilbert-Schmidt inner product $\SP{A}{B} = \tr(A^* B)$ and choose an arbitrary orthonormal basis in $\Herm(\mathbb{C}^d)$. With respect to this basis we can represent each state and each measurement operator as column vectors $(\vec{\rho}_{w})_{w=1}^W$ and $\bigl( (\vec{E}_{vk})_{k=1}^K \bigr)_{v=1}^V$ in $\AR^{d^2}$. In the remainder, $D := d^2$ and we assume that the states $(\rho_{w})_{w}$ and the measurements $(E_{vk})_{vk}$ linearly span $\Herm(\mathbb{C}^d)$. Due to the orthonormality of our reference basis,
\beq\label{temp.82394002A}
	\tr( \rho_{w} E_{vk} ) = (\vec{\rho}_{w})^T \vec{E}_{vk}.
\eeq
Define
\beq\label{definition.of.n.in.the.flexibility.context}
	P = ( P_{\mathrm{st}} \, |  \, P_{\mathrm{m}}  )
\eeq
where 
\[
	P_{\mathrm{st}} = ( \vec{\rho}_{1} \, | \cdots | \, \vec{\rho}_{W}), \ P_{\mathrm{m}} = (\vec{E}_{11}  \, | \cdots | \, \vec{E}_{1K} \, | \cdots  | \, \vec{E}_{V1} \, | \cdots | \, \vec{E}_{VK})
\]
specify all of the prepared states, and all of the performed measurements, respectively. The Gram matrix $G \in \AR^{N \times N}$, $N = W + VK$, of all of the states and measurement operators thus satisfies
\beq\label{temp.82394002B}
	G = P^{T}P = \left( \begin{array}{cc}  G_{\mathrm{st}} & \mathcal{D} \\   \mathcal{D}^T & G_{\mathrm{m}}   \end{array} \right)
\eeq
where $G_{\mathrm{st}} = P_{\mathrm{st}}^T P_{\mathrm{st}}$ is the Gram matrix of all the states, and $G_{\mathrm{m}} = P_{\mathrm{m}}^T P_{\mathrm{m}}$ is the Gram matrix of all the measurement operators. The off-diagonal block $\data$ captures the phenomenological behavior of the experiment. More precisely, by Born's rule, $\data_{w,n_{vk}}$ ($n_{vk} = (v-1)K + k$) is the probability for measuring outcome ``$k$'' given that we have prepared the state ``$w$'' and performed the measurement ``$v$''. It follows that $\data$ can (in principle) be determined experimentally if each measurement can be repeated infinitely many times. By the assumption that the states and the measurements linearly span $\Herm(\mathbb{C}^d)$,
\beq\label{Eq:the.spanning.assumption}
	\rank(G) = \rank(P) = D.
\eeq

\emph{Generic states and measurements.} Now suppose we would have intended to construct our experiment such that it allows the preparation of states and the performance of measurements described by $P_{\mathrm{theory}} \in \AR^{D \times N}$. We will never succeed precisely; the best quantum mechanically valid approximation $P$ in $\AR^{D \times N}$ of what is actually happening in the experiment is a random quantity drawn from a probability measure $\mu$ on $\AR^{D \times N}$. Let $\lambda$ denote the Lebesgue measure on $\AR^{D \times N}$, and define
\beq\label{Rigidity:def.of.set.L}
	\mathcal{L} := \bigl\{ \mu \text{ is a measure on $\AR^{D \times N}$} \; \bigl| \; \mu \ll \lambda, \; \mu\bigl(\AR^{D \times N}\bigr) = 1 \bigr\}
\eeq
i.e., $\mathcal{L}$ denotes the set of probability measures on $\AR^{D \times N}$ which are absolutely continuous with respect to the Lebesgue measure. In the remainder, experiments will be called \emph{generic} if\footnote{Hence, generically, density matrices are full-rank and measurements are described in terms of full-rank elements of a positive operator valued measure (POVM, see~\cite{NielsenChuang2000}).}
\beq\label{Rigidity:the.abs.cont.assumption}
	\mu \in \mathcal{L}.
\eeq

Is the measurement data $\data$ sufficient to determine uniquely its theoretical description in terms states $(\rho_{w})_{w}$ and measurements $(E_{vk})_{vk}$? Obviously, the answer is no. Let $U \in \mathbb{C}^{d \times d}$ be unitary, and let $(\rho_{w})_{w=1}^W$, $\bigl( (E_{vk})_{k=1}^K \bigr)_{v=1}^V$ be a valid explanation of the measurement data~$\data$. Then, $(U\rho_{w}U^*)_{w}$, $(UE_{vk}U^*)_{vk}$ is an equally valid quantum model for~$\data$. Hence, the states and measurements are never uniquely determined by the measured data. However, the linear transformations $\Herm(\mathbb{C}^d) \rightarrow \Herm(\mathbb{C}^d)$, $A \mapsto UAU^*$ ($U$ unitary) are special instances of orthogonal transformations in $(\Herm(\mathbb{C}^d, \sp{\cdot}{\cdot}))$. The Gram matrix $G$ associated to the states and the measurements is invariant under the simultaneous rotation of all the states and measurements. On the other hand, the Gram matrix specifies the states and measurements uniquely up to the simultaneous rotation of all the states and measurements. In this sense, $G$ determines uniquely all the pairwise geometric relationships between states and states, between states and measurements, and between measurements and measurements. 

\emph{Uniqueness of $G$?} It is natural to ask whether measurement data $\data$ suffices to uniquely determine the state-measurement Gram matrix $G$. A simple construction shows (see~\cite{Stark2012GramEstimations}) that the answer is no independently of $W$, $V$, $K$, and the Hilbert space dimension $d$ if all the states and measurement operators are full-rank matrices. For generic experiments, all of the states and measurements can be described by full-rank matrices in $\Herm(\mathbb{C}^d)$. However, in idealized situations involving rank-deficient states and measurements, the measured data~$\data$ can suffice to determine $G$ uniquely (see~\cite{Stark2012RigidityConsiderations}). Here, we are interested in generic experiments, and if the data~$\data$ is insufficient to determine $G$, we should ask: what are the necessary and sufficient conditions for the uniqueness of $G$?

In this paper we are regarding the states and the measurements as unconstrained vectors $(\vec{\rho}_{w})_{w=1}^W$ and $\bigl( (\vec{E}_{vk})_{k=1}^K \bigr)_{v=1}^V$ in $\AR^{D}$, i.e., we are disregarding the quantum mechanical constraints $\rho_{w} \geq 0$, $\tr(\rho_{w}) = 1$, $E_{vk} \geq 0$ and $\sum_{k} E_{vk} = \id$. Obviously, uniqueness of $G$ when disregarding the quantum mechanical constraints implies uniqueness of $G$ when taking into account the quantum mechanical constraints. Therefore, the criteria for the uniqueness of $G$ that we present in this paper are sufficient but not necessary from the quantum mechanical perspective but they are necessary and sufficient when regarding $(\vec{\rho}_{w})_{w=1}^W$ and $\bigl( (\vec{E}_{vk})_{k=1}^K \bigr)_{v=1}^V$ as unconstrained vectors in $\AR^{D}$ (relevant for applications outside of physics).

We have arrived at the following geometric problem: Let $G$ with $\rank(G) = D$ denote the Gram matrix associated to the column vectors of \emph{any} matrix $P \in \AR^{D \times N}$ (cf.~\eqref{definition.of.n.in.the.flexibility.context}), and let $\Omega \subseteq \{ 1,...,N \}^{\times 2}$ denote a subset of its entries. For which $\Omega$ and linear constraints $G_{\Omega} = \knowledge$ ($\knowledge \in \AR^{| \Omega |}$) is $G$ determined uniquely?  

In accordance with~\cite{SingerCucuringu} we call configurations $P \in \AR^{D \times N}$ satisfying $G_{\Omega} = (P^TP)_{\Omega} = \knowledge$ \emph{locally completable} if there exists an open neighborhood $U$ of $P$ such that up to trivial transformations,\footnote{Trivial transformations are $P \mapsto OP$ for $O$ orthogonal on $\AR^D$.} $P$ is the only configuration satisfying the inner product constraints $G_{\Omega} = \knowledge$. In contrast, the configuration $P \in \AR^{D \times N}$ satisfying $G_{\Omega} = (P^TP)_{\Omega} = \knowledge$ is \emph{globally completable} if (up to trivial transformations) $P$ is uniquely determined by the inner product constraints and therefore, $G$ is uniquely determined.

\emph{Conventional rigidity theory.} A closely related question is raised in conventional rigidity theory. Let $\vec{p}_{1}, ...,\vec{p}_{N} \in \AR^D$ be some unknown points. Let $S \in \AR^{N \times N}$ be defined by $S_{ij} = \| \vec{p}_{i} - \vec{p}_{j} \|$, and let $\Omega \subseteq \{ 1,...,N \}^{\times 2}$. Given $S_{\Omega}$, is $S$ determined uniquely, i.e., is $\vec{p}_{1}, ...,\vec{p}_{N} \in \AR^D$ uniquely determined up to simultaneous rotations and translations of all the points $\vec{p}_{1}, ...,\vec{p}_{N}$? This question has a long history. First discussions of the rigidity of $\vec{p}_{1}, ...,\vec{p}_{N}$ given $S_{\Omega}$ date back to Euler and Cauchy (see~\cite{euler1862opera,cauchy1813polygones}). A necessary and sufficient conditions for the exclusion of smooth deformations of generic points $\vec{p}_{1}, ...,\vec{p}_{N}$ given $S_{\Omega}$---the so called \emph{Asimow-Roth Theorem}---has been derived by Asimow and Roth (see~\cite{asimow1978rigidity} but also~\cite{Gluck1975}). The impossibility to smoothly deform $\vec{p}_{1}, ...,\vec{p}_{N}$ given $S_{\Omega}$ can be regarded as a local property at $\vec{p}_{1}, ...,\vec{p}_{N}$. One says that $\vec{p}_{1}, ...,\vec{p}_{N}$ given $S_{\Omega}$ is \emph{locally rigid}. On the other hand, proving that $\vec{p}_{1}, ...,\vec{p}_{N}$ given $S_{\Omega}$ is unique up to rotations and translations is a global property. Thus, the point cloud $\vec{p}_{1}, ...,\vec{p}_{N}$ given $S_{\Omega}$  is said to be \emph{globally rigid} if it is unique up to rigid transformations. Proving global rigidity is much more difficult than proving local rigidity. In 2005, R.~Connelly managed to derive a criterion which is sufficient to guarantee global rigidity for generic points $\vec{p}_{1}, ...,\vec{p}_{N}$ (see~\cite{connelly2005generic}). Only recently, in 2010, S.~J.~Gortler and co-workers have managed to prove that for generic points, R.~Connelly's criterion is also necessary (see~\cite{gortler2010characterizing}).

\emph{Relation to other work.} Despite its fundamental character, the analysis of the completability of vectors was motivated by practical considerations (see~\cite{Stark2012GramEstimations}), namely, self-consitent tomography, i.e., the task to fit quantum states and measurements to measured data (see~\cite{Stark2012GramEstimations,merkel2012self,James2012NJP,Mogilevtsev2010PRA,MogilevtsevHradil2009,MogilevtsevHradil2012,kimmel2013robust,medford2013self,RossetGisin2012}). Hence, self-consistent tomography is a generalization of state or measurement tomography (see for instance~\cite{BlumeKohut2010,Hradil1997,Sacchi1999,Audenaert2009,Gross2010,FlammiaGross2012}). Moreover, we would like to point out the close relation to self-testing of quantum devices (see~\cite{mayers2003self,bardyn2009device,pironio2009device,magniez2006self,mckague2012robust}) even though self-testing studies 2-party settings. Outside of physics, the uniqueness of $G$ is of fundamental interest in general data analysis (see~\cite{SingerCucuringu} and references therein). 

The seminal paper~\cite{SingerCucuringu} by A.~Singer and M.~Cucuringu was a crucial starting point for our work. For instance, Algorithm~\ref{flexibility.test} was proposed in~\cite{SingerCucuringu} to test local completability for inner product-constrained vectors. This test was shown to be sufficient for local completability. Here, we strengthen the power of their Algorithm by proving that the underlying test for local completability is for generic vectors not only sufficient but also necessary for local completability.\footnote{The proof of this is missing in~\cite{SingerCucuringu} because the provided arguments (see remarks below Eq.~(3.6) in~\cite{SingerCucuringu}) rely on the unproven inner product version of the so called Asimow-Roth Theorem.} The analysis of global completability of Gram matrices presented in~\cite{SingerCucuringu} relies on plausible conjectures (see Section~3.2 in~\cite{SingerCucuringu}). Here, we derive a necessary and sufficient criterion for global completability which is applicable for a large class of sets $\Omega$.

We would like to point out important findings by M.~Laurent and co-workers which are independent from our work. In~\cite{laurent2013positive} they analyzed the uniqueness of Gram matrices in the so called \emph{spherical} setting where the diagonal of $G$ is assumed to be known. Then, fixing the value of $\vec{p}_{i}^T \vec{p}_{j}$ determines the angle between the vectors $\vec{p}_{i}$ and $\vec{p}_{j}$. Hence, regarding angles as distances between vectors of unit lengths, the analysis of the spherical setting is closely related to conventional rigidity theory.

\emph{Our contributions and organization of the paper.} In Section~\ref{Sec:sufficient.conditions} we prove that for almost all $\vec{p}_{1}, ..., \vec{p}_{N}$ the sufficient criterion for local completability which has been derived in~\cite{SingerCucuringu} is also necessary. We show that the condition is universal, i.e., almost all state-measurement configurations $P$ (recall Eq.~\eqref{definition.of.n.in.the.flexibility.context}) are either locally completable or they are not locally completable. This allows for the computation of completability phase diagrams which hold true with probability 1 for generic experiments. In Section~\ref{Section:Global.rigidity}, we present a necessary and sufficient condition for global completability of inner product-constrained vectors $\vec{p}_{1}, ..., \vec{p}_{N}$. This condition is applicable to a large class of practically relevant scenarios but---in contrast to the criterion for local completability---it is not applicable to all possible choices for the index set $\Omega$ marking the known entries of $G$. Again we show that the condition for global completability is universal, i.e., almost all state-measurement configurations $P$ (recall Eq.~\eqref{definition.of.n.in.the.flexibility.context}) are either locally completable or they are not locally completable. In Section~\ref{Sec:Conclusions}, we summarize our findings and conclude the paper.

\section{Local completability}\label{Sec:sufficient.conditions}

Recall that $\Omega \subseteq \{ 1,...,N \}^{\times 2}$ marks the set of entries of $G$ that are fixed, i.e., $G_{\Omega} = \knowledge$. Disregarding the quantum mechanical constraints $\rho_{w} \geq 0$, $\tr(\rho_{w}) = 1$, $E_{vk} \geq 0$ and $\sum_{k} E_{vk} = \id$, we can describe the states and measurements as the columns of an arbitrary matrix $P \in \AR^{D \times N}$ (recall Eq.~\eqref{definition.of.n.in.the.flexibility.context}) satisfying $D = \rank(P) = \rank(G)$ (recall Eq.~\eqref{Eq:the.spanning.assumption}). Note that postulating $D = \rank(G)$ we implicitly assume $N \geq D$. The following Theorem is crucial for the remainder of this section.

\begin{theorem}\label{Theorem.local.rigidity}
	Let $P \in \AR^{D \times N}$ be generic and $G_{\Omega}(P) = \vec{\mathcal{K}} \in \AR^{| \Omega |}$. Set $r := \rank( d(G_{\Omega})_{P} )$. Then (with probability 1) there exists an open and full measure set $U_{P}$ containing $P$ such that $U_{P} \cap G_{\Omega}^{-1}(\vec{\mathcal{K}})$ is a smooth $(DN - r)$-dimensional submanifold of $\AR^{D \times N}$.
\end{theorem}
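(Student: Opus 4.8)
The plan is to recognize $G_{\Omega}$ as a quadratic (hence $C^{\infty}$) polynomial map $\AR^{D \times N} \to \AR^{|\Omega|}$ and to read off the submanifold structure of its fibres from the Constant Rank Theorem. The only non-routine ingredient is to show that the rank $r = \rank(d(G_{\Omega})_P)$ appearing in the statement coincides, for generic $P$, with the \emph{maximal} rank of the differential over all of $\AR^{D \times N}$, and that this maximal rank persists throughout a neighbourhood of $P$. Once constancy of the rank on an open set is secured, the differential-topological conclusion is standard.

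First I would record that every entry of the Jacobian of $G_{\Omega}$ is itself a polynomial in the entries of $P$, since each $(P^TP)_{ij}$ is quadratic in $P$. Consequently $P \mapsto \rank(d(G_{\Omega})_P)$ is lower semicontinuous: the condition $\rank \geq s$ is the non-vanishing of some $s \times s$ minor, an open (indeed Zariski-open) condition. Let $r_{\max}$ be the maximal value of this rank over $\AR^{D \times N}$. The locus where the rank is strictly below $r_{\max}$ is the common zero set of all $r_{\max} \times r_{\max}$ minors, a \emph{proper} algebraic subvariety of $\AR^{D \times N}$ (proper because $r_{\max}$ is attained somewhere), and hence a Lebesgue-null set. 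Its complement $U := \{ Q : \rank(d(G_{\Omega})_Q) = r_{\max} \}$ is therefore open and of full Lebesgue measure.

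Next I would invoke genericity. Since $P$ is drawn from a measure $\mu \in \mathcal{L}$ with $\mu \ll \lambda$, the $\lambda$-null set $\AR^{D \times N} \setminus U$ is also $\mu$-null, so with probability $1$ we have $P \in U$ and thus $r = \rank(d(G_{\Omega})_P) = r_{\max}$. Setting $U_P := U$ gives an open, full-measure set containing $P$ on which $G_{\Omega}$ has constant rank $r$. Applying the Constant Rank Theorem at each point of $U_P \cap G_{\Omega}^{-1}(\knowledge)$ produces local coordinates in which $G_{\Omega}$ becomes a linear projection of rank $r$; one cannot merely invoke the Regular Value Theorem here, because $r$ may be strictly less than $|\Omega|$ when the fixed inner products are linearly dependent as constraints. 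The fibre is then locally the graph of a smooth map, i.e.\ a smooth embedded submanifold of dimension $DN - r$ near each such point, and patching these local descriptions shows that $U_P \cap G_{\Omega}^{-1}(\knowledge)$ is a smooth $(DN - r)$-dimensional submanifold of $\AR^{D \times N}$.

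The main obstacle is precisely the identification $r = r_{\max}$, i.e.\ ruling out that the sampled $P$ lands on the exceptional rank-drop variety; this is exactly where the absolute-continuity assumption $\mu \in \mathcal{L}$ does its work, and it is also what upgrades the purely pointwise Constant Rank hypothesis to one valid on the full-measure set $U_P$. A secondary subtlety deserving care is that the theorem concerns only the single fibre over $\knowledge$: it is constancy of the rank on $U_P$ (not just maximality at the isolated point $P$) that guarantees that \emph{this} fibre, rather than merely the fibre over a nearby value, is smooth.
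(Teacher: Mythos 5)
Your proposal is correct and follows essentially the same route as the paper: both identify the rank-drop locus as a proper algebraic subset of $\AR^{D \times N}$ via minors of the polynomial Jacobian, conclude it is Lebesgue-null so that a generic $P$ avoids it with probability 1, and then apply the constant-rank level set theorem on the resulting open, full-measure set. The only differences are organizational — you define $r_{\max}$ directly and cite the standard fact that proper algebraic subvarieties are null, whereas the paper proves that fact from scratch (Nash stratification plus Sard's theorem) and reaches the constant-rank neighborhood through a case analysis over the minor-vanishing sets $\mathcal{N}_{\kappa}$.
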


In the Theorem, $d(G_{\Omega})_{P}$ denotes the Jacobian of the map $(\cdot)_{\Omega}: P \mapsto (P^TP)_{\Omega} \in \AR^{| \Omega |}$. We proved Theorem~\ref{Theorem.local.rigidity} because we aimed at justifying Algorithm~\ref{flexibility.test} below. However, after finishing the proof of Theorem~\ref{Theorem.local.rigidity} we noticed that our strategy to prove Theorem~\ref{Theorem.local.rigidity} is similar to the proof of the Asimow-Roth Theorem from conventional rigidity theory (see~\cite{asimow1978rigidity} but also~\cite{Gluck1975}). Therefore, we decided to move our proof of Theorem~\ref{Theorem.local.rigidity} to the appendix.

Let $O(D)$ denote the set of orthogonal matrices in $\AR^{D \times D}$. Given $P \in \AR^{D \times N}$ with $\rank(P) = D$, let 
\beq\label{Rigidity:Definition.of.trivial.mfd.N.tilde}
	\tilde{N}_{P} := O(D) P  =  \{ OP | O \in O(D)  \}
\eeq
denote the set of state-measurement configurations that can be reached by trivial transformations, i.e., simultaneous rotation of all columns of $P$. By Lemma~\ref{Appendix:Lemma:N.tilde.is.submanifold} in the appendix, $\tilde{N}_{P}$ is a $\frac{1}{2}D(D-1)$-dimensional submanifold of $\AR^{DÊ\times N}$. The comparison of the manifold $U_{P} \cap G_{\Omega}^{-1}(\vec{\mathcal{K}})$ with the trivial manifold $\tilde{N}_{P}$ will lead to a criterion for local completability which is both necessary and sufficient for almost all configurations $P$.

Let $P \in \AR^{D \times N}$ denote a generic state-measurement configuration. Then, by Theorem~\ref{Theorem.local.rigidity}, the following holds true with probability 1.
\begin{enumerate}[(i)]
\item		$P \in U \cap G_{\Omega}^{-1}(\vec{\mathcal{K}})$,
\item		$\rank\; d(G_{\Omega})_{P} = r$,
\item		$U \cap G_{\Omega}^{-1}(\vec{\mathcal{K}})$ is a $(DN - r)$-dimensional submanifold.
\end{enumerate}
Consequently, with probability 1, the smooth $\vec{\mathcal{K}}$-compatible deformations of $P$ form locally around $P$ a submanifold whose dimension is independent of the probability measure $\mu \in \mathcal{L}$ describing $P$.\footnote{Recall~\eqref{Rigidity:the.abs.cont.assumption} from the Introduction.} This bring us to the following Corollary of Theorem~\ref{Theorem.local.rigidity}.

\begin{corollary}\label{Rigidity:corollary.about.the.universality.of.hat.d}
	Assume that $P$ is generic. Then, with probability 1
	\begin{multline}\nn
		\dim( \tilde{N}_{P} ) = \dim(U \cap G_{\Omega}^{-1}(\vec{\mathcal{K}})) \\
		\Leftrightarrow
		\exists \tilde{U} \subseteq U \text{ open neighborhood of $P$}: \; \tilde{U} \cap \tilde{N}_{P} = \tilde{U} \cap G_{\Omega}^{-1}(\vec{\mathcal{K}}).
	\end{multline}
\end{corollary}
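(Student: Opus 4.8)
The plan is to exploit that the trivial manifold is contained in the solution set and then invoke a standard fact about inclusions of equidimensional submanifolds. First I would record the inclusion $\tilde{N}_{P} \subseteq G_{\Omega}^{-1}(\vec{\mathcal{K}})$: every $OP$ with $O \in O(D)$ has Gram matrix $(OP)^{T}(OP) = P^{T}O^{T}OP = P^{T}P = G(P)$, so in particular all fixed entries agree, $G_{\Omega}(OP) = \vec{\mathcal{K}}$; thus trivial transformations are always $\vec{\mathcal{K}}$-compatible. Throughout I would work on the probability-one event on which items (i)--(iii) hold, so that near $P$ the set $M := U \cap G_{\Omega}^{-1}(\vec{\mathcal{K}})$ is a smooth $(DN-r)$-dimensional embedded submanifold containing $P$, while $\tilde{N}_{P}$ is (by Lemma~\ref{Appendix:Lemma:N.tilde.is.submanifold}) a smooth $\tfrac{1}{2}D(D-1)$-dimensional embedded submanifold also containing $P$ (take $O = \id$).

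The direction ``$\Leftarrow$'' is immediate: if some open $\tilde{U}$ satisfies $\tilde{U} \cap \tilde{N}_{P} = \tilde{U} \cap G_{\Omega}^{-1}(\vec{\mathcal{K}})$, then shrinking $\tilde{U}$ into $U$ identifies a neighborhood on which $\tilde{N}_{P}$ and $M$ coincide, forcing $\dim(\tilde{N}_{P}) = \dim(M) = \dim(U \cap G_{\Omega}^{-1}(\vec{\mathcal{K}}))$. The whole content is in ``$\Rightarrow$''. Here I assume $\dim(\tilde{N}_{P}) = \dim(M)$ and must produce a neighborhood of $P$ on which the two sets agree. By the inclusion above, after shrinking $U$ we have $\tilde{N}_{P} \cap U \subseteq M$, so it suffices to prove that $\tilde{N}_{P}$ is open in $M$ near $P$: then some ambient neighborhood $\tilde{U} \subseteq U$ of $P$ satisfies $\tilde{U} \cap M \subseteq \tilde{N}_{P}$, and combined with $\tilde{N}_{P} \subseteq G_{\Omega}^{-1}(\vec{\mathcal{K}})$ and $\tilde{U} \cap G_{\Omega}^{-1}(\vec{\mathcal{K}}) = \tilde{U} \cap M$ this yields the claimed equality $\tilde{U} \cap \tilde{N}_{P} = \tilde{U} \cap G_{\Omega}^{-1}(\vec{\mathcal{K}})$.

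The remaining step is a manifold lemma: if $A$ and $B$ are embedded submanifolds of $\AR^{D \times N}$ with $A \subseteq B$, $\dim A = \dim B$, and $P \in A$, then $A$ is open in $B$ near $P$. I would prove this by noting that $A \subseteq B$ with both embedded forces $A$ to be an embedded submanifold of $B$, so the inclusion $A \hookrightarrow B$ is a smooth immersion whose differential at each point is the injection $T_{q}A \hookrightarrow T_{q}B$; since the two tangent spaces have equal dimension this differential is a linear isomorphism, and the inverse function theorem then makes the inclusion a local diffeomorphism, hence an open map. Applying this with $A = \tilde{N}_{P} \cap U$ and $B = M$ finishes the argument. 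The main obstacle---really the only non-formal point---is to justify that the inner-product-preserving rotation orbit $\tilde{N}_{P}$ sits inside the solution set as a genuine embedded submanifold of matching dimension rather than merely contributing an inclusion of tangent spaces at $P$; once the embedded-submanifold structure supplied by Lemma~\ref{Appendix:Lemma:N.tilde.is.submanifold} and Theorem~\ref{Theorem.local.rigidity} is in hand, equidimensionality upgrades the pointwise tangent-space equality $T_{P}\tilde{N}_{P} = T_{P}M$ to an honest local coincidence of the two sets.
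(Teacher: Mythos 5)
Your proof is correct and takes essentially the same route as the paper: both reduce the nontrivial direction ``$\Rightarrow$'' to the observation that the equidimensional inclusion $\tilde{N}_{P} \cap U \subseteq U \cap G_{\Omega}^{-1}(\vec{\mathcal{K}})$ makes $\tilde{N}_{P} \cap U$ a codimension-zero submanifold, hence open in the subspace topology, and then pass to an ambient open neighborhood $\tilde{U}$. The only differences are cosmetic: you prove the codimension-zero-implies-open step yourself via the inverse function theorem where the paper cites it (Proposition~5.1 in~\cite{Lee2013}), and you make explicit the inclusion $\tilde{N}_{P} \subseteq G_{\Omega}^{-1}(\vec{\mathcal{K}})$ that the paper uses tacitly.
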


\begin{proof}
	The direction ``$\Leftarrow$'' is obvious. It is left to explain the direction ``$\Rightarrow$''. Note that $\dim( \tilde{N}_{P} ) = \dim( \tilde{N}_{P} \cap U)$. Thus, by assumption, $\dim( \tilde{N}_{P} \cap U) =  \dim(U \cap G_{\Omega}^{-1}(\vec{\mathcal{K}}))$ implying that $\tilde{N}_{P} \cap U$ is a submanifold of $U \cap G_{\Omega}^{-1}(\vec{\mathcal{K}})$ with codimension 0. It follows that $\tilde{N}_{P} \cap U$ is open in $U \cap G_{\Omega}^{-1}(\vec{\mathcal{K}})$ (with respect to the subspace topology; see for example Proposition~5.1 in~\cite{Lee2013}). Consequently, there exists an open neighborhood $\tilde{U}$ ($\tilde{U} \subseteq U$) of $P$ such that 
	\[
		\tilde{U} \cap \bigl( \tilde{N}_{P} \cap U \bigr) = \tilde{U} \cap \bigl( G_{\Omega}^{-1}(\vec{\mathcal{K}}) \cap U \bigr)
		\Leftrightarrow
		\tilde{U} \cap \tilde{N}_{P} = \tilde{U} \cap G_{\Omega}^{-1}(\vec{\mathcal{K}}).
	\]
 This concludes the proof of the Corollary.	
\end{proof}

From Corollary~\ref{Rigidity:corollary.about.the.universality.of.hat.d} we arrive at Algorithm~\ref{flexibility.test} because $P$ is locally completable if and only if
\[
	\exists \tilde{U} \subseteq U \text{ open neighborhood of $P$}: \; \tilde{U} \cap \tilde{N}_{P} = \tilde{U} \cap G_{\Omega}^{-1}(\vec{\mathcal{K}}).
\]
Algorithm~\ref{flexibility.test} allows to test the unknown configuration $P$ for local completability with probability 1. We conclude that local completability is a universal property because either almost all state-measurement configurations $P$ are locally completable or they are not.

\begin{algorithm}
\caption{(Cucuringu, Singer~\cite{SingerCucuringu})}
\label{flexibility.test}
\begin{algorithmic}[1]
\Require $D$, $N = W + VK$, $\Omega$, property~\eqref{Rigidity:the.abs.cont.assumption} (see Section~\ref{Rigidity:generic.configurations}) is satisfied.
\State Draw $Q$ at random from the Lebesgue measure on $\AR^{D \times N}$.
\State Then, with probability 1, $\rank(d(G_{\Omega})_{Q})  = DN - \frac{1}{2} D \left( D - 1 \right)$ if and only if $P \in \AR^{D \times N}$ is locally completable.
\end{algorithmic}
\end{algorithm}

Algorithm~\ref{flexibility.test} has already been introduced in~\cite{SingerCucuringu} by M.~Cucuringu and A.~Singer. Here, we strengthened their Algorithm by proving that for generic configurations $P$ their condition (namely, $\rank(d(G_{\Omega})_{Q})  = DN - \frac{1}{2} D \left( D - 1 \right)$) is not only sufficient but also necessary for local completability.

We would like to stress the requirement that $P$ is sampled from measure satisfying~\eqref{Rigidity:the.abs.cont.assumption}. This condition is not met if (for instance) the configuration $P$ is known to carry some rank-deficient states or measurements. Under these circumstances, as proven in~\cite{SingerCucuringu}, $\rank(d(G_{\Omega})_{Q})  = DN - \frac{1}{2} D \left( D - 1 \right)$ is sufficient but not necessary for local completability.

\subsection{Completability phase diagrams}\label{Sec:drawing.phase.diagrams}

For each tupel $(W,V,D,\Omega,R)$, Algorithm~\ref{flexibility.test} determines the property `locally completable' or `locally flexible'. The results Algorithm~\ref{flexibility.test} produces when varying $W$ and $V$, can be merged to form a completability phase diagram (see for instance Figure~\ref{fig:2D.rigidity.diagram.pure.states}). In the following, when drawing completability diagrams, we are each time setting $D$ (equal to $\dim(\HR)^2$) to a specific value and start to vary the number of states $W$ and the number of measurements $V$. Points in these diagrams that are incompatible with the constraint $N \geq D$ are automatically assigned to the locally completable phase; $N$ denotes the number of columns of $P$ from Eq.~\eqref{definition.of.n.in.the.flexibility.context}.

We computed the phase diagrams for local completability for the following scenarios. In both cases we assume that the entries marking $\data$ are part of $\Omega$.
\begin{enumerate}[(I)]
\item		\emph{Scenario ``approximate pure states''}. We know a priori that the prepared states are approximately pure. Thus, our knowledge is of the form
		\beq
			\Omega_{\mathrm{st}} = 
			\left( \begin{array}{cccc}
			\bullet		& \circ		& \cdots			& \circ			 \\ 
			\circ			& \bullet		& 				& \vdots				\\ 
			\vdots		& 			& \ddots			& \circ			\\ 
			\circ			& \cdots		& \circ			& \bullet			\\ 
			\end{array} \right), \;
			\Omega_{\mathrm{m}} = \emptyset.
		\eeq
		Here, the symbol ``$\bullet$'' marks the known entries whereas the symbol ``$\circ$'' marks the unknown entries.
\item		\emph{Scenario ``approximate projective measurements with known degeneracies''}.	We know a priori that the performed measurements are approximately projective, and we know the degeneracy of each projector. Thus, our knowledge is of the form	
		\beq\label{Eq:pattern.of.known.entries.for.known.projective.measurements}
			\Omega_{\mathrm{st}} = \emptyset, \;
			\Omega_{\mathrm{m}} = 
			\left( \begin{array}{ccccccc}
			\cline{1-3}
			\cm{\bullet}	& \bullet		& \mc{\bullet}		& 			& 			& 			& 			 \\ 
			\cm{\bullet}	& \bullet		& \mc{\bullet}		& 			& 			& 			& 		 \\ 
			\cm{\bullet}	& \bullet		& \mc{\bullet}		& 			& 			& 			& 			\\ 
			\cline{1-6}
						& 			& 				& \cm{\bullet}	& \bullet		& \mc{\bullet}	& 		 \\ 
						& 			& 				& \cm{\bullet}	& \bullet		& \mc{\bullet}	& 		 \\ 
						& 			& 				& \cm{\bullet}	& \bullet		& \mc{\bullet}	& 		 \\ 
			\cline{4-6}
						& 			& 				& 			& 			& 			& \ddots  
			\end{array} \right).
		\eeq
		The blocks are of size $\dim(\HR) \times \dim(\HR)$.
\end{enumerate}
Please note the additive ``approximate'' in the names of the different scenarios. It stresses that it is important to make sure that our a priori knowledge has to correspond to \emph{generic} state-measurement configurations. For instance, if we enforce exact projective measurements, then the underlying state-measurement configuration are idealized and cannot be treated as generic configurations. Thus, when considering idealized knowledge about $G$ it is important to run the test for local completability for the considered special case. Then, passing the test is sufficient but necessary for local completability. Calculations were performed in Matlab using the rank function. Here, some caution is advised because this direct numerical computation of the rank of very large matrices is prone to numerical errors. The results are shown in Figures~\ref{fig:2D.rigidity.diagram.projective.measurements} to~\ref{fig:4D.rigidity.diagram.all.lengths.proj.non.deg}.

\begin{figure}[tbp]
\centering
\includegraphics[width=0.8\columnwidth]{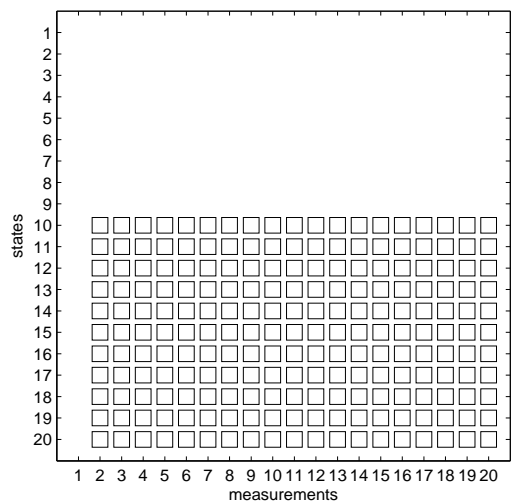}
\caption{Completability phase diagram in case of $\dim \HR = 2$ and scenario (I). The squares mark the locally completable phase, i.e., almost all configurations are locally completable.}
\label{fig:2D.rigidity.diagram.pure.states}
\end{figure}

\begin{figure}[tbp]
\centering
\includegraphics[width=0.8\columnwidth]{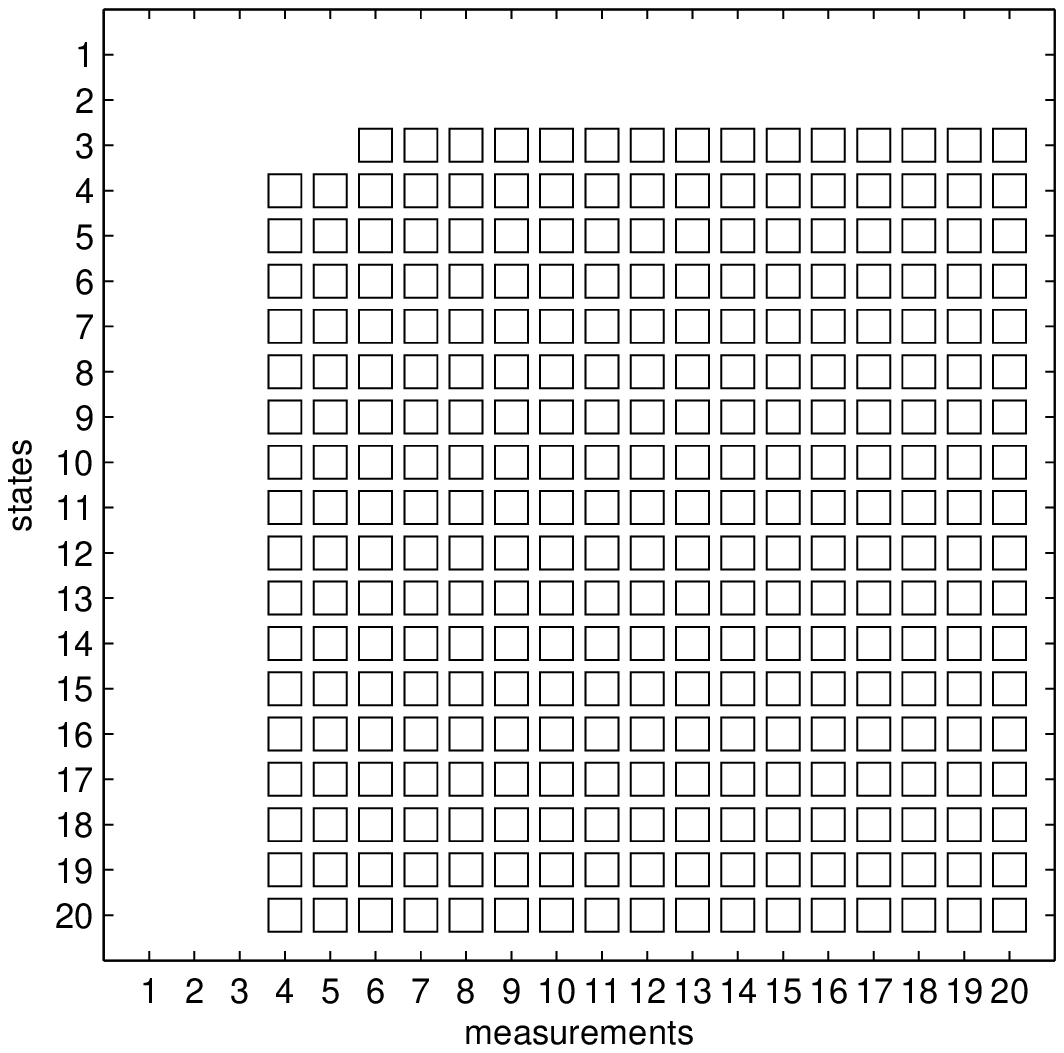}
\caption{Local completability phase diagram in case of $\dim \HR = 2$ and scenario (II). The squares mark the locally completable phase, i.e., almost all configurations are locally completable.}
\label{fig:2D.rigidity.diagram.projective.measurements}
\end{figure}

\begin{figure}[tbp]
\centering
\includegraphics[width=0.8\columnwidth]{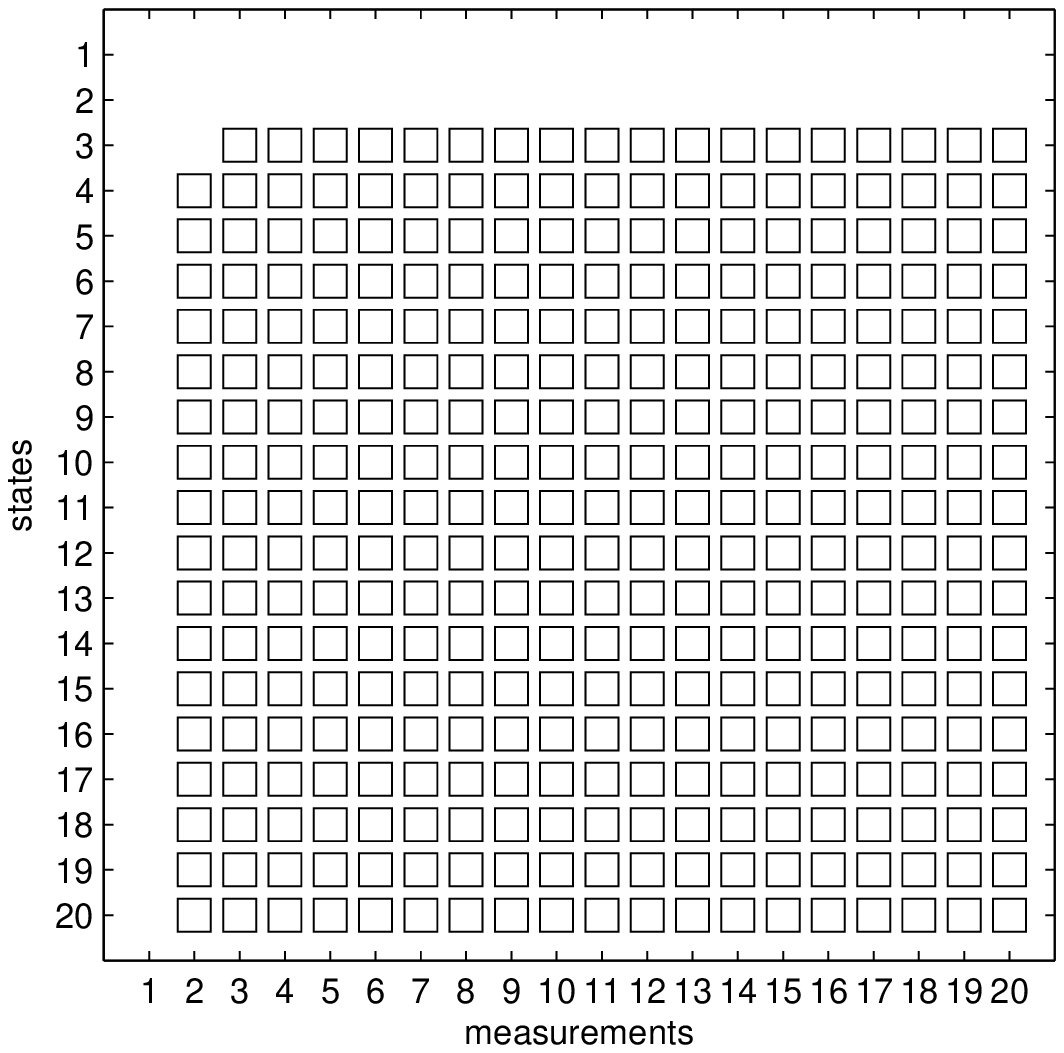}
\caption{Completability phase diagram in case of $\dim \HR = 2$ for scenario (I) in combination with scenario (II). The squares mark the locally completable phase, i.e., almost all configurations are locally completable.}
\label{fig:2D.rigidity.diagram.all.lengths.proj.non.deg}
\end{figure}

\begin{figure}[tbp]
\centering
\includegraphics[width=0.8\columnwidth]{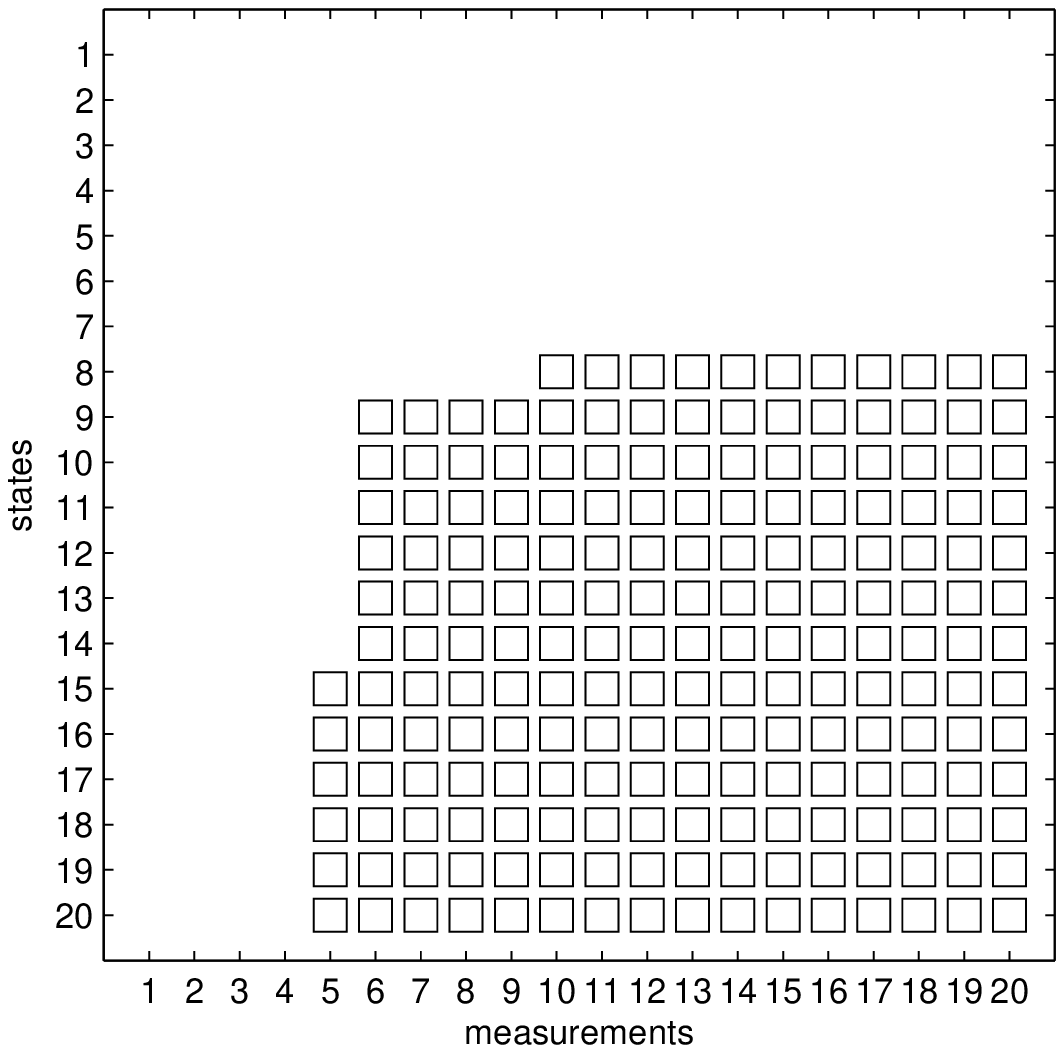}
\caption{Completability phase diagram in case of $\dim \HR = 3$ for scenario (I) in combination with scenario (II). The squares mark the locally completable phase, i.e., almost all configurations are locally completable.}
\label{fig:3D.rigidity.diagram.all.lengths.proj.non.deg}
\end{figure}

\begin{figure}[tbp]
\centering
\includegraphics[width=0.8\columnwidth]{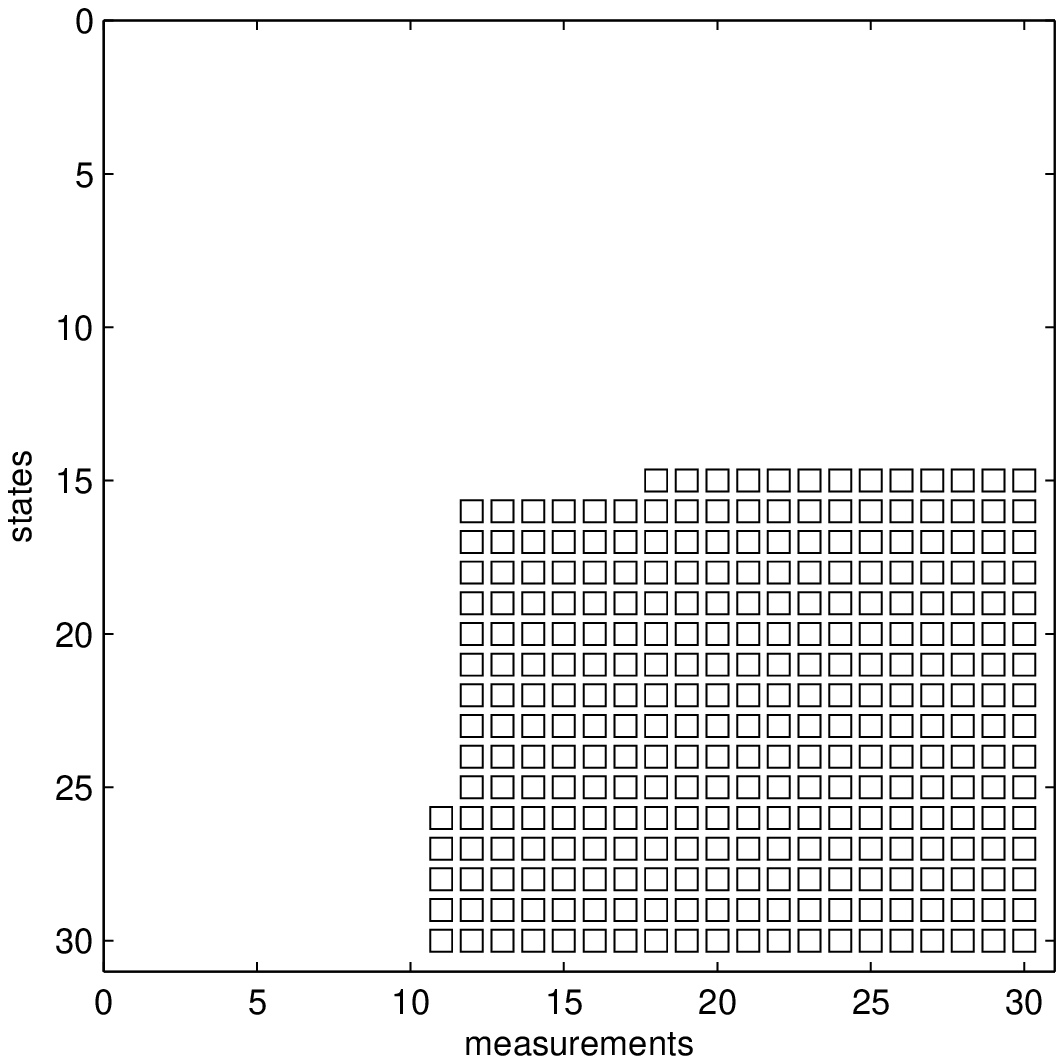}
\caption{Completability phase diagram in case of $\dim \HR = 4$ for scenario (I) in combination with scenario (II). The squares mark the locally completable phase, i.e., almost all configurations are locally completable.}
\label{fig:4D.rigidity.diagram.all.lengths.proj.non.deg}
\end{figure}

\section{Global completability}\label{Section:Global.rigidity}

The criterion we introduced in Section~\ref{Sec:sufficient.conditions} allowed to test for local completability, i.e., for the non-deformability of configurations. However, this is not completely sufficient for making sure that the state-measurement Gram matrix is uniquely specified by our knowledge $\Omega$ because there is still room for discrete symmetries. The purpose of this section is the derivation of a necessary and sufficient test to certify global completability. It is applicable for a large class of $\Omega \subseteq \{1,...,N\}^{\times 2}$. Passing this test guarantees that (up to trivial transformations) the state-measurement configuration $P$ is uniquely determined by the a priori knowledge $G_{\Omega} = \knowledge$, i.e., the state-measurement Gram matrix $G$ is uniquely determined by our knowledge $G_{\Omega} = \knowledge$. The idea of the proof is simple and intuitive geometrically: Lemma~\ref{Global.rigidity:A.lemma} states that all $\data$-compatible configurations $P$ (recall Eq.~\eqref{temp.82394002B}) can be regarded as the orbit of an action of $GL(\AR^D)$ on $\AR^{D \times N}$. Thus, we only need to determine conditions to break these orbits.

\subsection{Derivation of a necessary and sufficient criterion for global completability}

Define
\beq\begin{aligned}\label{Global.rigidity:Def.thetasharp}
	\Omega
	&=	\{ (i,j) \; | \; G_{ij} \text{ known a priori} \},\\
	\theta^{(\mathrm{st})}
	&:=	\{ (\theta^{(\mathrm{st})}(1), \theta^{(\mathrm{st})}(2),...) \; | \\& \ \ \  \ \ \ \ \theta^{(\mathrm{st})}(k) \in \{ 1,...,W \}^{\times 2}, \; G_{\theta^{(\mathrm{st})}(k)} \text{ known} \; \forall k  \},\\
	\theta^{(\mathrm{m})}
	&:=	\{ (\theta^{(\mathrm{m})}(1), \theta^{(\mathrm{m})}(2),...) \; | \\& \ \ \ \ \ \  \ \theta^{(\mathrm{m})}(k) \in \{ 1,...,VK \}^{\times 2}, \; G_{(W,W)+\theta^{(\mathrm{m})}(k)} \text{ known} \; \forall k  \},
\end{aligned}\eeq
i.e., $\theta^{(\mathrm{st})}(k)$ marks a known entry of the state-Gram matrix $G_{\mathrm{st}}$ whereas $\theta^{(\mathrm{m})}(k)$ marks a known entry of the measurement-Gram matrix $G_{\mathrm{m}}$. In this way we can use $\theta^{(\mathrm{st})}$ and $\theta^{(\mathrm{m})}$ to describe our a priori knowledge about the Gram matrices $G_{\mathrm{st}}$ and $G_{\mathrm{m}}$.

\begin{lemma}\label{Global.rigidity:existence.Pst.Pm.lemma}
	Let $\data \in \AR^{W \times VK}$ satisfying $\rank(\data) = D$. Then, there exist $\Pst^{(0)} \in \AR^{D \times W}$ and $\Pm^{(0)} \in \AR^{D \times VK}$ such that $\data = (\Pst^{(0)})^T \Pm^{(0)}$ and $\rank(\Pst^{(0)}) = \rank(\Pm^{(0)}) = D$.
\end{lemma}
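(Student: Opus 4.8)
The plan is to recognize the claimed identity $\data = (\Pst^{(0)})^T \Pm^{(0)}$ with $\rank(\Pst^{(0)}) = \rank(\Pm^{(0)}) = D$ as a \emph{full-rank factorization} of $\data$: since $\rank(\data) = D$, I want to write $\data$ as the product of a matrix of full column rank $D$ and a matrix of full row rank $D$. Concretely, $(\Pst^{(0)})^T \in \AR^{W \times D}$ should have full column rank and $\Pm^{(0)} \in \AR^{D \times VK}$ full row rank. The existence of such a factorization for any rank-$D$ matrix is standard; the only thing to check with care is that the rank hypothesis forces \emph{both} factors to attain rank exactly $D$ simultaneously.

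First I would produce an explicit factorization via the reduced singular value decomposition. Write $\data = U \Sigma V^T$, where $U \in \AR^{W\times D}$ and $V \in \AR^{VK \times D}$ have orthonormal columns and $\Sigma \in \AR^{D \times D}$ is the diagonal matrix of the $D$ nonzero singular values of $\data$ (there are exactly $D$ of them because $\rank(\data) = D$), which are strictly positive. Let $\Sigma^{1/2}$ denote the entrywise square root, so that $\Sigma^{1/2}$ is invertible. I would then set
\[
   \Pst^{(0)} := \Sigma^{1/2} U^T \in \AR^{D \times W}, \qquad \Pm^{(0)} := \Sigma^{1/2} V^T \in \AR^{D \times VK}.
\]

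It remains to verify the two required properties. For the product, $(\Pst^{(0)})^T \Pm^{(0)} = U \Sigma^{1/2} \Sigma^{1/2} V^T = U \Sigma V^T = \data$, as desired. For the ranks, $U$ and $V$ each have $D$ orthonormal (hence linearly independent) columns, so $U^T$ and $V^T$ each have rank $D$; multiplying on the left by the invertible matrix $\Sigma^{1/2}$ preserves rank, whence $\rank(\Pst^{(0)}) = \rank(\Pm^{(0)}) = D$. I do not anticipate any genuine obstacle here, as the lemma is elementary linear algebra; the one place to stay attentive is the dimension bookkeeping together with the fact that the hypothesis $\rank(\data) = D$ guarantees exactly $D$ positive singular values, so that $\Sigma$ (and thus $\Sigma^{1/2}$) is truly invertible and both factors carry the full rank $D$ that is needed later for the $GL(\AR^D)$-orbit description in Lemma~\ref{Global.rigidity:A.lemma}. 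An equivalent elementary route, avoiding the SVD, is to take the columns of $(\Pst^{(0)})^T$ to be any basis of the $D$-dimensional column space of $\data$ and to read off $\Pm^{(0)}$ as the matrix expressing each column of $\data$ in that basis; then $(\Pst^{(0)})^T$ has full column rank $D$ by construction, and the inequality $D = \rank(\data) \le \rank(\Pm^{(0)}) \le D$ forces $\rank(\Pm^{(0)}) = D$ as well.
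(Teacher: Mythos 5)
Your proof is correct and takes essentially the same route as the paper: both construct the factorization from the (reduced) singular value decomposition of $\data$ and then observe that each factor must have rank exactly $D$. The only cosmetic difference is that the paper places all the singular values into $\Pm^{(0)}$ (taking $\Pst^{(0)}$ to be the transposed left singular vectors) and certifies the ranks via the sandwich $D = \rank(\data) \leq \min\{\rank(\Pst^{(0)}), \rank(\Pm^{(0)})\} \leq D$, whereas you split them symmetrically as $\Sigma^{1/2}$ on each factor and argue the ranks directly.
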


\begin{proof}
	By the SVD of $\data$,
	\[
		\data = U S V^T = U \left( \begin{array}{cc}  \id_{D} & 0  \\ 0  &  0  \end{array}\right) S V^T = (\Pst^{(0)})^T \Pm^{(0)}
	\]
	with
	\[
		\Pst^{(0)} := (U_{(\cdot),1:D})^T, \; \Pm^{(0)} :=  \left( \begin{array}{ccc}  s_{1} &  &   \\    &  \ddots &   \\    &    &  s_{D} \end{array}\right) (V^T)_{1:D,(\cdot)}.
	\]
	Moreover, $\rank(\Pst^{(0)}) = \rank(\Pm^{(0)}) = D$ because
	\[
		D = \rank(\data) \leq \min\{ \rank(\Pst^{(0)}), \rank(\Pm^{(0)}) \} \leq D.
	\]
\end{proof}

\begin{lemma}\label{Global.rigidity:A.lemma}
	Let $\data \in \AR^{W \times VK}$ satisfying $\rank(\data) = D$, and let $\Pst^{(0)} \in \AR^{D \times W}$ and $\Pm^{(0)} \in \AR^{D \times VK}$ such that $\data = \bigl(\Pst^{(0)}\bigr)^T \Pm^{(0)}$ (existence guaranteed by Lemma~\ref{Global.rigidity:existence.Pst.Pm.lemma}). Assume $\Pst \in \AR^{D \times W}$ and $\Pm \in \AR^{D \times VK}$. Then, 
	\[
		 \Pst^T \Pm = \data \; \Leftrightarrow \; \exists A \in GL(D) : \Pst = A^{-T} \Pst^{(0)}, \; \Pm = A \Pm^{(0)}.
	\]
\end{lemma}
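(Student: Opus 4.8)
The plan is to treat the two implications separately, with essentially all of the work residing in the forward direction; the statement is really a uniqueness result for rank-$D$ factorizations of $\data$. The reverse implication ``$\Leftarrow$'' I would dispatch by direct substitution: if $\Pst = A^{-T}\Pst^{(0)}$ and $\Pm = A\Pm^{(0)}$ for some $A \in GL(D)$, then
\[
	\Pst^T\Pm = \bigl(A^{-T}\Pst^{(0)}\bigr)^T\bigl(A\Pm^{(0)}\bigr) = (\Pst^{(0)})^T A^{-1}A\,\Pm^{(0)} = (\Pst^{(0)})^T\Pm^{(0)} = \data,
\]
where I used $(A^{-T})^T = A^{-1}$.

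For the forward direction I would first argue that any factorization $\data = \Pst^T\Pm$ with $\rank(\data)=D$ is automatically of full rank: since the product factors through $\AR^D$, submultiplicativity of the rank gives $D = \rank(\data) \le \min\{\rank(\Pst),\rank(\Pm)\} \le D$, forcing $\rank(\Pst)=\rank(\Pm)=D$ (and likewise for the reference factorization, exactly as in Lemma~\ref{Global.rigidity:existence.Pst.Pm.lemma}). Hence $\Pst^T,(\Pst^{(0)})^T \in \AR^{W\times D}$ both have full column rank, while $\Pm,\Pm^{(0)} \in \AR^{D\times VK}$ both have full row rank.

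The heart of the argument is the standard fact that in a rank factorization the left factor has column space equal to that of the product. Applying this to the two factorizations of $\data$ gives $\mathrm{col}(\Pst^T) = \mathrm{col}(\data) = \mathrm{col}((\Pst^{(0)})^T)$. Since $\Pst^T$ and $(\Pst^{(0)})^T$ are $W\times D$ matrices of full column rank spanning this common $D$-dimensional subspace, there is an invertible $C\in GL(D)$ with $\Pst^T = (\Pst^{(0)})^T C$. Substituting into $\Pst^T\Pm = (\Pst^{(0)})^T\Pm^{(0)}$ yields $(\Pst^{(0)})^T C\Pm = (\Pst^{(0)})^T\Pm^{(0)}$; because $(\Pst^{(0)})^T$ has full column rank it is left-cancellable (it admits the left inverse $(\Pst^{(0)}(\Pst^{(0)})^T)^{-1}\Pst^{(0)}$), so $C\Pm = \Pm^{(0)}$, i.e.\ $\Pm = C^{-1}\Pm^{(0)}$. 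Setting $A := C^{-1}$ and transposing $\Pst^T = (\Pst^{(0)})^T C$ to $\Pst = C^T\Pst^{(0)} = A^{-T}\Pst^{(0)}$ then exhibits the required $A$.

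The mathematics here is elementary, so the only real obstacle is bookkeeping: keeping straight which factor is transposed, and which of $\Pst^T$, $\Pm^{(0)}$ is left- versus right-cancellable. An alternative that makes $A$ manifest is to set $A := \Pm\,(\Pm^{(0)})^{+}$ with the right inverse $(\Pm^{(0)})^{+} = (\Pm^{(0)})^T\bigl(\Pm^{(0)}(\Pm^{(0)})^T\bigr)^{-1}$; one then checks $A\Pm^{(0)} = \Pm$ from the row-space equality $\mathrm{row}(\Pm) = \mathrm{row}(\data) = \mathrm{row}(\Pm^{(0)})$, deduces $A \in GL(D)$ since $\rank(\Pm)=D$, and recovers $\Pst = A^{-T}\Pst^{(0)}$ by cancelling $\Pm^{(0)}$ on the right in $\Pst^T A\,\Pm^{(0)} = (\Pst^{(0)})^T\Pm^{(0)}$ (valid since $\Pm^{(0)}$ has trivial left null space). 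This route rests on the same column/row-space input but displays the transformation explicitly.
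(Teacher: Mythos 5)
Your proof is correct, but there is nothing in the paper to compare it against: the paper does not prove Lemma~\ref{Global.rigidity:A.lemma} at all, deferring instead to the external reference~\cite{Stark2013UniquenessInSelfConsistentTomography}. On its own merits your argument is sound and self-contained. The reverse direction by substitution is immediate. In the forward direction, the rank bound $D = \rank(\data) \le \min\{\rank(\Pst),\rank(\Pm)\} \le D$ correctly forces both factors (and both reference factors) to have rank $D$; the identification $\mathrm{col}(\Pst^T) = \mathrm{col}(\data) = \mathrm{col}\bigl((\Pst^{(0)})^T\bigr)$ follows since a $D$-dimensional subspace containing a $D$-dimensional subspace equals it; the change-of-basis matrix $C$ with $\Pst^T = (\Pst^{(0)})^T C$ is invertible as a map between two bases of the same space; and left-cancellation of $(\Pst^{(0)})^T$ via the left inverse $\bigl(\Pst^{(0)}(\Pst^{(0)})^T\bigr)^{-1}\Pst^{(0)}$ is legitimate because $\Pst^{(0)}$ has full row rank. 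The bookkeeping $A := C^{-1}$, $A^{-T} = C^T$ closes the loop correctly. Your alternative construction $A := \Pm\,(\Pm^{(0)})^{+}$ is equally valid and has the merit of exhibiting $A$ explicitly; both are instances of the standard uniqueness-up-to-$GL(D)$ property of rank factorizations. In effect, your write-up supplies a proof that the paper leaves to an (in press) citation, which makes the paper's global-completability argument self-contained.
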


See~\cite{Stark2013UniquenessInSelfConsistentTomography} for a proof of Lemma~\ref{Global.rigidity:A.lemma}.

\begin{lemma}\label{Global.rigidity:Lemma.stating.uniqueness.of.ATA}
	Let $\data \in \AR^{W \times VK}$ satisfying $\rank(\data) = D$. Hence, there exists a $(D \times D)$ submatrix $\Dsquare$ of $\data$ such that $\rank(\Dsquare) = D$. Let $\thetasharp = (\thetasharp(j))_{j=1}^{J}$ (recall Eq.~\eqref{Global.rigidity:Def.thetasharp}) denote the index set marking the a priori known entries of $G_{\sharp}$ ($\sharp = (\mathrm{st})$ or $\sharp = (\mathrm{m})$), i.e., the entries $(G_{\sharp})_{\thetasharp(j)}$ of $G$ are known a priori for all $j \in \{ 1,...,J \}$. Let $\Pst^{(0)} \in \AR^{D \times W}$ and $\Pm^{(0)} \in \AR^{D \times VK}$ be such that $(\Pst^{(0)})^T \Pm^{(0)} = \data$ and let $P \in \AR^{D \times N}$ be such that $G := P^T P$ is compatible with our entire a priori knowledge specified by $\data$ and the entries specified by $(\thetasharp(j))_{j=1}^{J}$. Let $R^{(0)}_{\mathrm{st}}, R^{(0)}_{\mathrm{m}} \in \AR^{D \times D}$ be submatrices of $\Pst^{(0)}$ resp. $\Pm^{(0)}$ corresponding to the location of $\Dsquare$ in $\data$, i.e., $(R_{\mathrm{st}}^{(0)})^T R^{(0)}_{\mathrm{m}} = \Dsquare$. Let $A \in GL(\AR^{D})$ be such that $\Pst = A^{-T} \Pst^{(0)}$ and $\Pm = A \Pm^{(0)}$ (existence guaranteed by Lemma~\ref{Global.rigidity:A.lemma}). Then, $A^TA$ is uniquely determined by $P^{(0)}$ and $\data$ if and only if\footnote{For $Q \in \AR^{N \times N}$, $\mathrm{vec}(N) \in \AR^{N^2}$ is the vector created from $M$ by stacking all the columns of $M$ on top of each other.}
	\begin{multline}\nn
		\rank\Bigl( \mathrm{vec}(R^{(0)}_{\bar{\sharp}} B_{\theta(1)} (R^{(0)}_{\bar{\sharp}})^T) \Bigl|  \cdots \Bigl| \mathrm{vec}(R^{(0)}_{\bar{\sharp}} B_{\theta(J)} (R^{(0)}_{\bar{\sharp}})^T)  \Bigr) = \frac{1}{2} D \left( D + 1 \right).
	\end{multline}	
	Here, $\bar{\sharp} = (\mathrm{st})$ if $\sharp = (\mathrm{m})$ reps. $\bar{\sharp} = (\mathrm{m})$ if $\sharp = (\mathrm{st})$, and
	\begin{multline}\nn
		B_{\theta(n)} := \frac{1}{2} \left(  \overrightarrow{(P^{(0)}_{\sharp})}_{\thetasharp(j)_{1}}  \overrightarrow{(P^{(0)}_{\sharp})}_{{\thetasharp(j)}_{2}}^T + \overrightarrow{(P^{(0)}_{\sharp})}_{{\thetasharp(j)}_{2}}  \overrightarrow{(P^{(0)}_{\sharp})}_{{\thetasharp(j)}_{1}}^T  \right), \\ \forall j = 1, ..., J,
	\end{multline}
	where $\overrightarrow{(P^{(0)}_{j})}$ denotes the $j$-th column of the matrix $P^{(0)}$.
	
\end{lemma}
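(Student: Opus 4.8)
The plan is to reduce everything to the question of when the positive definite symmetric matrix $M := A^TA$ is forced. By Lemma~\ref{Global.rigidity:A.lemma} every $\data$-compatible configuration arises from the fixed factorization $(\Pst^{(0)},\Pm^{(0)})$ through a single $A\in GL(D)$ via $\Pst = A^{-T}\Pst^{(0)}$ and $\Pm = A\Pm^{(0)}$, and as $A$ ranges over $GL(D)$ the off-diagonal block $\data$ is automatically reproduced, so the only additional information comes from the a priori known entries. First I would observe that $G_{\mathrm{st}} = (\Pst^{(0)})^T M^{-1}\Pst^{(0)}$ and $G_{\mathrm{m}} = (\Pm^{(0)})^T M\,\Pm^{(0)}$, so $M$ determines $G$; moreover two matrices $A,A'$ give the same $G$ exactly when $A' = OA$ for an orthogonal $O$ (a trivial transformation), which is exactly when $A^TA = A'^TA'$, and every positive definite $M$ is realized by the feasible $A = M^{1/2}$. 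Hence the entire content of the lemma is a characterization of when the a priori constraints single out $M$ among positive definite symmetric matrices.

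The second step is to pass from $M$ to the more convenient unknown $Y := R_{\bar{\sharp}}^T R_{\bar{\sharp}}$, the Gram matrix of the $D$ columns of $P_{\bar{\sharp}}$ sitting at the location of $\Dsquare$. Since selecting those columns commutes with the left action of $A$, one has $R_{\mathrm{m}} = A R^{(0)}_{\mathrm{m}}$ and $R_{\mathrm{st}} = A^{-T} R^{(0)}_{\mathrm{st}}$, so $Y$ is a fixed congruence of $M$ or of $M^{-1}$ through the invertible matrix $R^{(0)}_{\bar{\sharp}}$: for $\sharp=(\mathrm{st})$ this gives $M^{-1} = R^{(0)}_{\mathrm{m}} Y^{-1} (R^{(0)}_{\mathrm{m}})^{T}$, and for $\sharp=(\mathrm{m})$ it gives $M = R^{(0)}_{\mathrm{st}} Y^{-1} (R^{(0)}_{\mathrm{st}})^{T}$. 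Thus $M$ is uniquely determined if and only if $Y^{-1}$ is. Each known entry $(G_{\sharp})_{\thetasharp(j)}$ equals $\tr(M^{\pm1} B_{\theta(j)})$ with $B_{\theta(j)}$ the symmetrized rank-two matrix built from the two columns of $P^{(0)}_{\sharp}$ indexed by $\thetasharp(j)$; substituting the congruence identity turns this into the affine constraint $\tr(Y^{-1} C_j) = (G_{\sharp})_{\thetasharp(j)}$, where $C_j$ is the congruence of $B_{\theta(j)}$ by $R^{(0)}_{\bar{\sharp}}$.

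Finally I would decide when these affine constraints pin down the positive definite matrix $Y^{-1}\in\Sym(D)$. The constraints cut out an affine subspace of $\Sym(D)$ whose direction is the annihilator of $\mathrm{span}\{C_j\}$, so its codimension is $\rank(\mathrm{vec}(C_1)\,|\,\cdots\,|\,\mathrm{vec}(C_J))$. Because the true $Y^{-1}$ lies in the \emph{open} positive definite cone, the intersection of this affine subspace with the cone is a relatively open neighborhood of the true point, hence is a singleton if and only if the affine subspace is itself a single point, i.e. if and only if $\{C_j\}$ spans all of $\Sym(D)$, a space of dimension $\tfrac12 D(D+1)$. Since congruence by the invertible $R^{(0)}_{\bar{\sharp}}$ is a linear automorphism of $\Sym(D)$, the family $\{C_j\}$ spans iff $\{B_{\theta(j)}\}$ does, and this is exactly the stated test $\rank\bigl(\mathrm{vec}(R^{(0)}_{\bar{\sharp}} B_{\theta(1)} (R^{(0)}_{\bar{\sharp}})^{T})\,|\,\cdots\,|\,\mathrm{vec}(R^{(0)}_{\bar{\sharp}} B_{\theta(J)} (R^{(0)}_{\bar{\sharp}})^{T})\bigr) = \tfrac12 D(D+1)$. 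The hard part will be keeping the congruence dictionary correct---in particular the switch between $M$ and $M^{-1}$ depending on whether $\sharp=(\mathrm{m})$ or $(\mathrm{st})$, and the (immaterial but error-prone) placement of transposes---together with the positive-definiteness argument in the necessity direction, where one must verify that a rank-deficient system genuinely produces a second feasible positive definite solution and not merely a second symmetric matrix.
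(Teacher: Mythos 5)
Your proposal is correct and follows essentially the same route as the paper's own proof: both reduce the problem to whether the linear trace constraints $(G_{\sharp})_{\thetasharp(j)} = \tr\bigl(B_{\theta(j)}\,M^{\pm 1}\bigr)$ pin down a symmetric positive definite matrix, which holds if and only if the matrices $B_{\theta(j)}$ span $\Sym(\AR^{D})$, and both then use that congruence by the invertible $R^{(0)}_{\bar{\sharp}}$ is a linear automorphism of $\Sym(\AR^{D})$, making that span condition equivalent to the stated rank test. Your open-cone argument in the necessity direction (a rank-deficient constraint system admits nearby positive definite alternatives, each realized by a genuine configuration via $A = M^{1/2}$) makes explicit a point the paper's proof glosses over when it asserts the ``if and only if'' for the symmetric matrix $M$.
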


\begin{proof}
	By assumption, for $\sharp = (\mathrm{st})$
	\begin{multline}\nn
		(G_{\sharp})_{\theta^{\mathrm{st}}(j)} 
		= \SP{ (P_{\mathrm{st}})_{\theta^{\mathrm{st}}(j)_{1}} }{ (P_{\mathrm{st}})_{\theta^{\mathrm{st}}(j)_{2}} }
		= \SP{ (P^{(0)}_{\mathrm{st}})_{\theta^{\mathrm{st}}(j)_{1}} }{ (A^T A)^{-1}  (P^{(0)}_{\mathrm{st}})_{\theta^{\mathrm{st}}(j)_{2}} }, \\ j \in \{ 1,...,J \},
	\end{multline}
	and
	\begin{multline}\nn
		(G_{\sharp})_{\theta^{\mathrm{m}}(j)} 
		= \SP{ (P_{\mathrm{m}})_{\theta^{\mathrm{m}}(j)_{1}} }{ (P_{\mathrm{m}})_{\theta^{\mathrm{m}}(j)_{2}} }
		= \SP{ (P^{(0)}_{\mathrm{m}})_{\theta^{\mathrm{m}}(j)_{1}} }{ A^T A  (P^{(0)}_{\mathrm{m}})_{\theta^{\mathrm{m}}(j)_{2}} }, \\ j \in \{ 1,...,J \},
	\end{multline}
	for $\sharp = (\mathrm{m})$ fixed by our a priori knowledge. Set
	\[
		M := \left\{ \begin{array}{ll}   
		(A^T A)^{-1}, 	&\text{ if $\sharp = (\mathrm{st})$,}\\
		A^T A,	 	&\text{ if $\sharp = (\mathrm{m})$.}
		\end{array} \right.
	\]
	So $M$ is in both cases a real, symmetric matrix (positive definite) satisfying
	\beq
		G_{\thetasharp(j)}
		= \tr\Bigl( \overrightarrow{(P^{(0)}_{\sharp})}_{\thetasharp(j)_{2}}   \overrightarrow{(P^{(0)}_{\sharp})}^T_{\thetasharp(j)_{1}} M \Bigr), \; \forall j \in \{ 1,...,J \}.
	\eeq
	Note that this property is equivalent to
	\beq\label{Global.rigidity:imposed.onstraints}
		G_{\thetasharp(j)}
		=	\tr\bigl(  B_{\theta(j)}  M \bigr), \; \forall j \in \{ 1,...,J \},
	\eeq
	because of the invariance of the trace under transposition and because $M^T = M$. Note that the linear constraints~\eqref{Global.rigidity:imposed.onstraints} specify the symmetric matrix $M$ uniquely if and only if
	\beq\label{Global.rigidity:criterion.temp.5784579}
		\mathrm{span} \left\{  B_{\theta(j)}  \right\}_{j=1}^J
		= \Sym( \AR^{D} ).
	\eeq
	Note that $R_{\bar{\sharp}} \in GL(\AR^{D})$ because $D = \rank( \Dsquare ) \leq \rank(R_{\bar{\sharp}}) \leq D$. Define $\mathcal{R} : \Sym( \AR^{D} ) \rightarrow \Sym( \AR^{D} )$ by
	\[
		\mathcal{R}: S \mapsto R_{\bar{\sharp}} S R_{\bar{\sharp}}^T.
	\]
	The invertibility of $R_{\bar{\sharp}}$ implies $\mathcal{R} \in GL(\Sym( \AR^{D} ))$. Hence, by the criterion~\eqref{Global.rigidity:criterion.temp.5784579}, the linear constraints ~\eqref{Global.rigidity:imposed.onstraints} specify $M$ uniquely if and only if
	\beq\label{Global.rigidity:criterion.temp.5784579B}
		\mathrm{span} \left\{  R_{\bar{\sharp}} B_{\theta(j)} R_{\bar{\sharp}}^T  \right\}_{j=1}^J
		= \Sym( \AR^{D} ).
	\eeq
	Eq.~\eqref{Global.rigidity:criterion.temp.5784579B} is satisfied if and only if
	\begin{multline}
		\rank\Bigl( \mathrm{vec}(R_{\bar{\sharp}} B_{\theta(1)} R_{\bar{\sharp}}^T) \Bigl| \mathrm{vec}(R_{\bar{\sharp}} B_{\theta(2)} R_{\bar{\sharp}}^T) \Bigl| \cdots  \Bigr)\\
		= \dim(\Sym( \AR^{D} )) = \frac{1}{2} D \left( D + 1 \right).
	\end{multline}
	 This concludes the proof of the Lemma.
	
\end{proof}

\begin{theorem}\label{Global.rigidity:Theorem.stating.the.sufficient.criterion}
	Let $\data \in \AR^{W \times VK}$ satisfying $\rank(\data) = D$. We assume that the states and POVM elements have been renamed such that\footnote{$\data_{1:D,1:D}$ refers to the top left submatrix of $\data$ of size $D \times D$.} $\rank(\data_{1:D,1:D}) = D$. Moreover, let $\thetasharp = (\thetasharp(j))_{j=1}^{J}$ (recall Eq.~\eqref{Global.rigidity:Def.thetasharp}) be an index set such that we know a priori the value of $(G_{\sharp})_{\thetasharp(j)}$ for all $j \in \{ 1,...,J \}$ where $\sharp = (\mathrm{st})$ or $\sharp = (\mathrm{m})$. Define $\mathcal{M} := \bigl( \mathrm{vec}\bigl( \mathcal{N}_{1}^{(\sharp)} \bigr) \bigl| \cdots \bigl| \mathrm{vec}\bigl( \mathcal{N}_{J}^{(\sharp)} \bigr) \bigr)$ where
	\begin{multline}\nn
		\mathcal{N}_{j}^{(\mathrm{st})}
		:=
		\frac{1}{2} \Bigl(  \bigl( \data_{\theta^{(\mathrm{st})}(j)_{1},1:D} \bigr)^T \data_{\theta^{(\mathrm{st})}(j)_{2},1:D}  + \bigl( \data_{\theta^{(\mathrm{st})}(j)_{2},1:D} \bigr)^T \data_{\theta^{(\mathrm{st})}(j)_{1},1:D} \Bigr),
	\end{multline}
	and
	\begin{multline}\nn
		\mathcal{N}_{j}^{(\mathrm{m})}
		:=
		\frac{1}{2} \Bigl(   \data_{1:D,\theta^{(\mathrm{m})}(j)_{1}}  \bigl(\data_{1:D,\theta^{(\mathrm{m})}(j)_{2}}\bigr)^T  + \data_{1:D,\theta^{(\mathrm{m})}(j)_{2}}  \bigl(\data_{1:D,\theta^{(\mathrm{m})}(j)_{1}}\bigr)^T \Bigr).
	\end{multline}
	Then, the following are equivalent:
	\begin{enumerate}[(i)]
		\item There exists only one state-measurement Gram matrix $G$ with $\rank(G) = D$ which is compatible with our a priori knowledge.
		\item $\rank(\mathcal{M}) = D(D + 1)/2$
	\end{enumerate}
\end{theorem}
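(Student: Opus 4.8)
The plan is to reduce statement~(i) to the uniqueness of the single positive-definite matrix $M:=A^TA$ furnished by Lemma~\ref{Global.rigidity:A.lemma}, to invoke the proof of Lemma~\ref{Global.rigidity:Lemma.stating.uniqueness.of.ATA} to turn this into a span condition on the matrices $B_{\theta(j)}$, and finally to match that intrinsic condition with~(ii) through a congruence identity expressing $\mathcal{N}_{j}^{(\sharp)}$ in terms of $\data$.

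First I would observe that every Gram matrix compatible with $\data$ automatically has rank $D$: if $P=(\Pst\,|\,\Pm)$ satisfies $\Pst^T\Pm=\data$ with $\rank(\data)=D$, then $\rank(\Pst),\rank(\Pm)\ge D$, forcing $\rank(P)=D$ and hence $\rank(G)=D$. By Lemma~\ref{Global.rigidity:A.lemma} any such $P$ equals $(A^{-T}\Pst^{(0)}\,|\,A\Pm^{(0)})$ for some $A\in GL(D)$, so that
\[
    G_{\mathrm{st}}=(\Pst^{(0)})^T M^{-1}\Pst^{(0)},\qquad G_{\mathrm{m}}=(\Pm^{(0)})^T M\,\Pm^{(0)},\qquad M:=A^TA .
\]
Because $\Pst^{(0)}$ and $\Pm^{(0)}$ have full row rank $D$ (Lemma~\ref{Global.rigidity:existence.Pst.Pm.lemma}), these relations are invertible in $M$ (for instance $M$ is recovered from $G_{\mathrm{m}}$ via a right inverse of $\Pm^{(0)}$), so $G\mapsto M$ is a bijection and statement~(i) is equivalent to the uniqueness of the positive-definite $M$ compatible with the a priori knowledge.

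Next I would encode the knowledge $\thetasharp$ as the linear system $\tr(B_{\theta(j)}M)=G_{\thetasharp(j)}$, exactly as in the proof of Lemma~\ref{Global.rigidity:Lemma.stating.uniqueness.of.ATA}; a symmetric $M$ is then pinned down uniquely iff $\SPAN\{B_{\theta(j)}\}_{j=1}^J=\Sym(\AR^D)$. Here one must close the gap between ``unique among symmetric matrices'' and ``unique among positive-definite matrices'': since the true $M$ is positive definite and the positive-definite cone is open, any solution affine space of positive dimension meets that cone in a relatively open set and hence contains infinitely many positive-definite solutions near $M$, so the two notions of uniqueness coincide. Thus (i) is equivalent to $\dim\SPAN\{B_{\theta(j)}\}=D(D+1)/2$.

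The main computation — and the step I expect to be the real obstacle — is to rewrite this intrinsic condition in terms of $\data$ alone. Using $\data=(\Pst^{(0)})^T\Pm^{(0)}$ and identifying $R^{(0)}_{\mathrm{st}}=(\Pst^{(0)})_{(\cdot),1:D}$ and $R^{(0)}_{\mathrm{m}}=(\Pm^{(0)})_{(\cdot),1:D}$ with $\Dsquare=\data_{1:D,1:D}$, one checks the column identities $(R^{(0)}_{\bar\sharp})^T\,\overrightarrow{(P^{(0)}_{\sharp})}_{\thetasharp(j)_{\ell}}$ equal the appropriate slice of $\data$ (for $\sharp=(\mathrm{st})$ this slice is the column $(\data_{\theta^{(\mathrm{st})}(j)_{\ell},1:D})^T$, and dually a column $\data_{1:D,\theta^{(\mathrm{m})}(j)_{\ell}}$ for $\sharp=(\mathrm{m})$). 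Substituting into $B_{\theta(j)}=\tfrac12(\cdots)$ yields the clean identity
\[
    \mathcal{N}_{j}^{(\sharp)}=(R^{(0)}_{\bar\sharp})^T\,B_{\theta(j)}\,R^{(0)}_{\bar\sharp},\qquad j=1,\dots,J .
\]
Since $R^{(0)}_{\bar\sharp}\in GL(D)$ (as shown in the proof of Lemma~\ref{Global.rigidity:Lemma.stating.uniqueness.of.ATA}), the congruence $S\mapsto (R^{(0)}_{\bar\sharp})^T S\,R^{(0)}_{\bar\sharp}$ is an automorphism of $\Sym(\AR^D)$ and therefore preserves the dimension of linear spans. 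Hence $\rank(\mathcal{M})=\dim\SPAN\{\mathcal{N}_j^{(\sharp)}\}=\dim\SPAN\{B_{\theta(j)}\}$, so (ii) $\rank(\mathcal{M})=D(D+1)/2$ holds iff $\SPAN\{B_{\theta(j)}\}=\Sym(\AR^D)$. Chaining this with the equivalences above yields (i) $\Leftrightarrow$ (ii). The only delicate points are bookkeeping the transpose in the congruence correctly and the positive-definite-versus-symmetric uniqueness remark; the rest is routine linear algebra once the identity for $\mathcal{N}_j^{(\sharp)}$ is in hand.
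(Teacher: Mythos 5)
Your proposal is correct, and its skeleton is the same as the paper's: reduce statement (i), via Lemma~\ref{Global.rigidity:A.lemma}, to the uniqueness of $M = A^TA$, then close with the span criterion $\mathrm{span}\{B_{\theta(j)}\}=\Sym(\AR^D)$ underlying Lemma~\ref{Global.rigidity:Lemma.stating.uniqueness.of.ATA}. The differences are in execution, and they favor your write-up. The paper runs both implications through configurations: for ``$\Leftarrow$'' it deduces $A^TA=\mathcal{A}^T\mathcal{A}$ from Lemma~\ref{Global.rigidity:Lemma.stating.uniqueness.of.ATA}, extracts an orthogonal $O$ with $\mathcal{A}=OA$, and concludes $\mathcal{G}=G$; for ``$\Rightarrow$'' it cancels the full-row-rank factor $\Pm^{(0)}$ to obtain uniqueness of $A^TA$ and cites the Lemma again. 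Your single bijection $G\leftrightarrow M$ (recovering $M$ from $G_{\mathrm{m}}$ via a right inverse of $\Pm^{(0)}$) accomplishes both directions at once. More significantly, you make explicit two steps the paper leaves tacit: first, the congruence identity $\mathcal{N}_j^{(\sharp)}=(R^{(0)}_{\bar{\sharp}})^T B_{\theta(j)} R^{(0)}_{\bar{\sharp}}$, which is what actually ties the rank condition of Lemma~\ref{Global.rigidity:Lemma.stating.uniqueness.of.ATA} (written there with the transposes on the other side, as $R_{\bar{\sharp}}B_{\theta(j)}R_{\bar{\sharp}}^T$ --- harmless, since either congruence is a linear automorphism of $\Sym(\AR^D)$ and so preserves span dimension) to the data-only matrices $\mathcal{N}_j^{(\sharp)}$ appearing in the theorem, whereas the paper's proof merely remarks that one may take $\Dsquare=\data_{1:D,1:D}$; and second, the openness-of-the-positive-definite-cone argument equating uniqueness among positive-definite solutions with uniqueness among symmetric solutions of the linear system, which the ``only if'' direction genuinely requires and which the paper's Lemma asserts without comment. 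Same route, but your version fills real gaps in the paper's argument.
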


\begin{proof}
	We first show ``$\Leftarrow$'': Assume $P = (\Pst | \Pm)$, $\mathcal{P} = (\curlyPst | \curlyPm)$ are two arbitrary configurations satisfying individually all the conditions of the Theorem. Define
	\[
		G := P^TP, \; \mathcal{G} := \mathcal{P}^T \mathcal{P}.
	\]
	Let $(\Pst^{(0)} | \Pm^{(0)})$ be any configuration satisfying the data-only constraint $(\Pst^{(0)})^T \Pm^{(0)} = \data$. By Lemma~\ref{Global.rigidity:A.lemma}
	\beq\begin{aligned}\nn
		&\exists A \in GL(\AR^{D}): \Pst = A^{-T} \Pst^{(0)}, \; \Pm = A \Pm^{(0)},\\
		&\exists \mathcal{A} \in GL(\AR^{D}): \Pst = \mathcal{A}^{-T} \Pst^{(0)}, \; \Pm = \mathcal{A} \Pm^{(0)},\\
	\end{aligned}\eeq
	By Lemma~\ref{Global.rigidity:Lemma.stating.uniqueness.of.ATA},
	\[
		A^TA = \mathcal{A}^T\mathcal{A}. 
	\]
	Therefore (recall that $A^TA = \mathcal{A}^T\mathcal{A} > 0$), there exists an orthogonal matrix $O \in O(\AR^{D})$ such that $\mathcal{A} = OA$. It follows that 
	\beq\begin{aligned}\nn
		\curlyPst &= \mathcal{A}^{-T} \Pst^{(0)} = O A^{-T} \Pst^{(0)} = O \Pst\\
		\curlyPm &= \mathcal{A} \Pm^{(0)} = O A \Pm^{(0)} = O \Pm 
	\end{aligned}\eeq
	and therefore, $\mathcal{P} = O P$ implying that
	\[
		\mathcal{G} = \mathcal{P}^T \mathcal{P} = P^T O^T O P = G.
	\]
	This holds true any two state-measurement configurations $\mathcal{P},P$ satisfying the conditions from the Theorem. Hence, the conditions from the Theorem uniquely specify the state-measurement Gram matrix $G$.
	
	Consider the opposite direction ``$\Rightarrow$'': By assumption, $G$ is determined uniquely, and $\rank(G) = D$. Let $P_{1},P_{2} \in \AR^{D \times N}$. Then, $G = P_{1}^T P_{1} = P_{2}^T P_{2}$ if and only if there exists an orthogonal matrix $O \in O(\AR^{D})$ such that $P_{2} = O P_{1}$. By Lemma~\ref{Global.rigidity:A.lemma}, there exist $A_{1},A_{2} \in GL(\AR^{D})$ such that $P_{\mathrm{st},1} = A_{1}^{-T} \Pst^{(0)}$, $P_{\mathrm{m},1} = A_{1} \Pm^{(0)}$, $P_{\mathrm{st},2} = A_{2}^{-T} \Pst^{(0)}$, and $P_{\mathrm{m},2} = A_{2} \Pm^{(0)}$. It follows that
	\[
		A_{2} \Pm^{(0)} = P_{\mathrm{m},2} = O P_{\mathrm{m},1} = O A_{1} \Pm^{(0)}.
	\]
	Since $\Pm^{(0)}$ is full-rank, it follows that $A_{2} = OA_{1}$ and consequently, $A_{2}^T A_{2} = A_{1}^T A_{1}$ is determined uniquely. Hence, by Lemma~\ref{Global.rigidity:Lemma.stating.uniqueness.of.ATA}, $\rank(\mathcal{M}) = D(D +1)/2$ (recall that the assumptions from the Theorem, we can choose $\data_{\square} = \data_{1:D,1:D}$). This concludes the proof of the Theorem.

\end{proof}

\subsection{Universality}

Let $\mathcal{M}$ be defined as in Theorem~\ref{Global.rigidity:Theorem.stating.the.sufficient.criterion}. The matrix $\mathcal{M}$ is a function of the data $\data = \Pst^T \Pm$ and can therefore be regarded as a function $\mathcal{M}(P)$ of the underlying the state-measurement configuration $P = (\Pst | \Pm)$.

\begin{theorem}\label{Global.rigidity:theorem.stating.universality}
	Fix $\Omega$ and let $\mathcal{M}(P)$ be defined as in Theorem~\ref{Global.rigidity:Theorem.stating.the.sufficient.criterion}. Then, either all $P \in \AR^{D \times N}$ are such that $\rank( \mathcal{M}(P) ) < D(D + 1)/2$, or almost all $P \in \AR^{D \times N}$ satisfy $\rank( \mathcal{M}(P) ) = D(D + 1)/2$.	
\end{theorem}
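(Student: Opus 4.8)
The plan is to exploit that $\mathcal{M}(P)$ is a matrix whose entries are \emph{polynomials} in the entries of $P$, so that the full-rank condition $\rank(\mathcal{M}(P)) = \frac{1}{2}D(D+1)$ is governed by the non-vanishing of certain minors, which are themselves polynomials in $P$. First I would record the two structural facts that make the argument go through. Since $\data = \Pst^T \Pm$, every entry of $\data$ is bilinear—hence polynomial—in the entries of $P = (\Pst \mid \Pm)$; and by the defining formulas, each $\mathcal{N}_j^{(\sharp)}$ is a sum of outer products of rows or columns of $\data$, so every entry of $\mathcal{N}_j^{(\sharp)}$, and therefore every entry of $\mathcal{M}(P)$, is a polynomial in the entries of $P$. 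Moreover, each column of $\mathcal{M}(P)$ is the vectorization of a symmetric $D \times D$ matrix, so the column space of $\mathcal{M}(P)$ is contained in the $\frac{1}{2}D(D+1)$-dimensional subspace $\mathrm{vec}(\Sym(\AR^D))$; consequently $\rank(\mathcal{M}(P)) \le \frac{1}{2}D(D+1)$ for \emph{every} $P \in \AR^{D \times N}$.

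Next I would introduce the maximal rank $r_{\max} := \max_{P \in \AR^{D \times N}} \rank(\mathcal{M}(P))$, attained at some $P_{\star}$. For any matrix one has $\rank \ge r_{\max}$ if and only if some $r_{\max} \times r_{\max}$ minor is nonzero, and each such minor is a polynomial in the entries of $P$. Hence the bad set $B := \{ P : \rank(\mathcal{M}(P)) < r_{\max} \}$ is precisely the common zero locus of the finite family of $r_{\max} \times r_{\max}$ minors of $\mathcal{M}(P)$, i.e.\ a real algebraic subset of $\AR^{D \times N} \cong \AR^{DN}$. Because $r_{\max}$ is attained at $P_{\star}$, at least one of these minors is nonzero there and thus not identically zero, so $B$ is a \emph{proper} algebraic subset.

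To close the argument I would invoke the standard fact that the zero set of a polynomial on $\AR^n$ that is not identically zero has Lebesgue measure zero (proved by induction on $n$ via Fubini, using that a nonzero univariate polynomial has finitely many roots). Applying this to the minor that is nonzero at $P_{\star}$ shows that $B$ is contained in a null set, hence is itself Lebesgue-null. Then I would split into the two cases that produce the claimed dichotomy: if $r_{\max} = \frac{1}{2}D(D+1)$, the full-measure complement of $B$ consists exactly of configurations with $\rank(\mathcal{M}(P)) = \frac{1}{2}D(D+1)$, which is the second alternative; and if $r_{\max} < \frac{1}{2}D(D+1)$, then by maximality every $P$ satisfies $\rank(\mathcal{M}(P)) \le r_{\max} < \frac{1}{2}D(D+1)$, which is the first alternative.

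The hard part is not really deep here—it is essentially a genericity argument of the same flavor as Corollary~\ref{Rigidity:corollary.about.the.universality.of.hat.d}—but the two points that require care are, first, verifying honestly that $\mathcal{M}(P)$ depends \emph{polynomially} (and not merely rationally) on $P$, since the passage from $\data$ to $\mathcal{M}$ must involve no matrix inversion; inspecting the definitions of $\mathcal{N}_j^{(\mathrm{st})}$ and $\mathcal{N}_j^{(\mathrm{m})}$ in Theorem~\ref{Global.rigidity:Theorem.stating.the.sufficient.criterion} confirms they are built purely from products of submatrices of $\data$, so polynomiality genuinely holds. Second, I would make sure the measure-zero statement for proper real algebraic sets is stated and applied correctly, as this is the single nontrivial analytic ingredient on which the whole dichotomy rests.
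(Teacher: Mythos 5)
Your proof is correct, and its skeleton---polynomiality of the entries of $\mathcal{M}(P)$ in the entries of $P$, the bound $\rank(\mathcal{M}(P)) \le \frac{1}{2}D(D+1)$ coming from the symmetry of the $\mathcal{N}_j^{(\sharp)}$, and a dichotomy governed by vanishing of polynomial minors---is the same as the paper's. The execution differs in two ways worth noting. First, the paper aggregates all $\kappa \times \kappa$ minors into the single sum-of-squares polynomial $\omega_{\kappa}(P)$ and works with its zero set $\mathcal{N}_{\kappa}$, invoking (by analogy) Lemma~\ref{N.closed.zero-set.lemma} from the appendix, whose proof establishes the alternative ``$\mathcal{N}_{\kappa}$ has measure zero or $\mathcal{N}_{\kappa} = \AR^{D \times N}$'' via semialgebraic (Nash) decomposition, Sard's theorem, and a Taylor-expansion argument; you instead introduce the maximal rank $r_{\max}$, pick a single $r_{\max} \times r_{\max}$ minor that is nonzero at a maximizing configuration $P_{\star}$, and invoke the elementary Fubini-induction fact that a not-identically-zero polynomial has Lebesgue-null zero set. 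Your route is more self-contained and avoids the heavier real-algebraic-geometry machinery. Second, your organization via $r_{\max}$ makes the case analysis cleaner: the theorem's first alternative ($\rank(\mathcal{M}(P)) < \frac{1}{2}D(D+1)$ for \emph{all} $P$) falls out immediately from $r_{\max} < \frac{1}{2}D(D+1)$ by maximality, whereas the paper obtains it from $\mathcal{N}_{D(D+1)/2} = \AR^{D \times N}$, i.e., from all top-size minors vanishing identically. The two dichotomies are logically equivalent, and both proofs are valid; yours trades the paper's appendix apparatus for a standard elementary lemma, which is arguably a simplification.
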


\begin{proof}

	Let 
	\[
		\mathcal{R}_{\kappa} := \{ P \in \AR^{D \times N} \; | \; \rank \left(  \mathcal{M}(P)  \right) = \kappa  \},
	\]
	define
	\[
		\omega_{\kappa}(P) := \sum_{\substack{A \subseteq  \mathcal{M}(P) \\ A \in \mathbb{R}^{\kappa \times \kappa}  }} \left( \det A \right)^2
	\]
	($A \subseteq \mathcal{M}(P)$ means that $A$ is a submatrix of $\mathcal{M}(P)$), and 
	\[
		\mathcal{N}_{\kappa} := \{ P \in \AR^{D \times N} \; | \; \omega_{\kappa}(P) = 0 \}.
	\]
	Recall that $\mathcal{N}_{j}^{(\sharp)} \in \Sym(\AR^{D})$ (see Theorem~\ref{Global.rigidity:Theorem.stating.the.sufficient.criterion}). Therefore, 
	\[
		\rank( \mathcal{M}(P) ) \leq \dim(\Sym(\AR^{D})) = \frac{1}{2} D (D + 1).
	\]
	It follows that 
	\[
		\mathcal{R}_{\kappa} = \left\{ \begin{array}{ll}  \mathcal{N}_{\kappa+1} \cap \Bigl( \AR^{D \times N} \setminus \mathcal{N}_{\kappa} \Bigr) & \forall \kappa < D(D+1)/2,   \\    \AR^{D \times N} \setminus \mathcal{N}_{D(D+1)/2} & \forall \kappa = D(D+1)/2 .  \end{array} \right. 
	\]
	Note that $\omega_{\kappa}(P)$ is a polynomial in the entries of $P$ because $\omega_{\kappa}(P)$ is a polynomial in the entries of $\mathcal{M}(P)$, the entries of $\mathcal{M}(P)$ are polynomial in the entries of $\data$, and $\data = \Pst^T \Pm$ is polynomial in the entries of $P$. Hence, $\mathcal{N}_{\kappa}$ is an algebraic set. Analogous to the proof of Lemma~\ref{N.closed.zero-set.lemma} in Section~\ref{Section:appendix} it follows that either $\mathcal{N}_{D(D+1)/2}$ is a zero set, or $\mathcal{N}_{D(D+1)/2} = \AR^{D \times N}$. Assume that $\mathcal{N}_{D(D+1)/2}$ is a zero set. Then, $\mathcal{R}_{D(D+1)/2} = \AR^{D \times N} \setminus \mathcal{N}_{D(D+1)/2}$ is full measure. On the other hand, if $\mathcal{N}_{D(D+1)/2} = \AR^{D \times N}$ then, $\mathcal{R}_{D(D+1)/2} = \AR^{D \times N} \setminus \mathcal{N}_{D(D+1)/2} = \emptyset$.	 This concludes the proof of the Theorem.
\end{proof}

By Theorem~\ref{Global.rigidity:theorem.stating.universality}, almost all state-measurement configurations $P \in \AR^{D \times (W+VK)}$ lead to the same answer to the question whether or not $\rank( \mathcal{M}(P) ) = D(D + 1)/2$. In this sense, this is a universal property. Assume that~\eqref{Rigidity:the.abs.cont.assumption}  applies (see Section~\ref{Rigidity:generic.configurations}). Then, Theorem~\ref{Global.rigidity:Theorem.stating.the.sufficient.criterion} and Theorem~\ref{Global.rigidity:theorem.stating.universality} imply for fixed knowledge $\Omega$ that either
\beq\label{Global.rigidity:alternative.A}
	\rank( \mathcal{M}(P) ) = D(D + 1)/2 \text{ with probability 1,}
\eeq
or
\beq\label{Global.rigidity:alternative.B}
	\rank( \mathcal{M}(P) ) \neq D(D + 1)/2 \text{ with probability 1.}
\eeq
Which of the two alternatives~\eqref{Global.rigidity:alternative.A} and~\eqref{Global.rigidity:alternative.B} applies for $\Omega$ can be tested by sampling $Q \in \AR^{D \times (W+VK)}$ at random from the Lebesgue measure. If $\rank( \mathcal{M}(Q) ) = D(D + 1)/2$ then (with probability 1) $\rank( \mathcal{M}(P) ) = D(D + 1)/2$ for almost all $P \in \AR^{D \times (W+VK)}$. On the other hand, if $\rank( \mathcal{M}(Q) ) \neq D(D + 1)/2$ then (with probability 1) $\rank( \mathcal{M}(P) ) \neq D(D + 1)/2$ for almost all $P \in \AR^{D \times (W+VK)}$. This procedure yields Algorithm~\ref{Global.rigidity:Algorithm.for.a.priori.test}.

\begin{algorithm}
\caption{(Test for generic global completability)}
\label{Global.rigidity:Algorithm.for.a.priori.test}
\begin{algorithmic}[1]
\Require $D$, $W$, $V$, $K$, $\Omega$, property~\eqref{Rigidity:the.abs.cont.assumption} is satisfied.
\State Draw $Q \in \AR^{D \times (W+VK)}$ at random with respect to probability measure which is Lebesgue absolutely continuous.
\State If $\rank( \mathcal{M}(Q) ) = D(D + 1)/2$ then (Lebesgue) almost all $P \in \AR^{D \times N}$ are globally completable.
\end{algorithmic}
\end{algorithm}

We perform Algorithm~\ref{Global.rigidity:Algorithm.for.a.priori.test} to analyze the scenarios listed below. In all cases we assume that the entries marking $\data$ are part of $\Omega$, i.e., assumed to be known.
\begin{enumerate}[(I)]
\item		\emph{Scenario ``approximate pure states''}. We know a priori that the prepared states are approximately pure. Thus, our knowledge is of the form
		\beq
			\Omega_{\mathrm{st}} = 
			\left( \begin{array}{cccc}
			\bullet		& \circ		& \cdots			& \circ			 \\ 
			\circ			& \bullet		& 				& \vdots				\\ 
			\vdots		& 			& \ddots			& \circ			\\ 
			\circ			& \cdots		& \circ			& \bullet			\\ 
			\end{array} \right), \;
			\Omega_{\mathrm{m}} = \emptyset.
		\eeq
		Here, the symbol ``$\bullet$'' marks the known entries whereas the symbol ``$\circ$'' marks the unknown entries.
\item		\emph{Scenario ``approximate projective measurements with known degeneracies''}.	We know a priori that the performed measurements are approximately projective, and we know the degeneracy of each projector. Thus, our knowledge is of the form	
		\beq\label{Eq:pattern.of.known.entries.for.known.projective.measurements}
			\Omega_{\mathrm{st}} = \emptyset, \;
			\Omega_{\mathrm{m}} = 
			\left( \begin{array}{ccccccc}
			\cline{1-3}
			\cm{\bullet}	& \bullet		& \mc{\bullet}		& 			& 			& 			& 			 \\ 
			\cm{\bullet}	& \bullet		& \mc{\bullet}		& 			& 			& 			& 		 \\ 
			\cm{\bullet}	& \bullet		& \mc{\bullet}		& 			& 			& 			& 			\\ 
			\cline{1-6}
						& 			& 				& \cm{\bullet}	& \bullet		& \mc{\bullet}	& 		 \\ 
						& 			& 				& \cm{\bullet}	& \bullet		& \mc{\bullet}	& 		 \\ 
						& 			& 				& \cm{\bullet}	& \bullet		& \mc{\bullet}	& 		 \\ 
			\cline{4-6}
						& 			& 				& 			& 			& 			& \ddots  
			\end{array} \right).
		\eeq
		The blocks are of size $\dim(\HR) \times \dim(\HR)$.
\item		\emph{Scenario ``approximate projective measurements with unknown degeneracies''}. We know a priori that the performed measurements are approximately projective, but we do not know the degeneracies of the projectors. Thus, our knowledge is of the form				
		\beq
			\Omega_{\mathrm{st}} = \emptyset, \;
			\Omega_{\mathrm{m}} = 
			\left( \begin{array}{ccccccc}
			\cline{1-3}
			\cm{\circ}		& \bullet		& \mc{\bullet}		& 			& 			& 			& 			 \\ 
			\cm{\bullet}	& \circ		& \mc{\bullet}		& 			& 			& 			& 		 \\ 
			\cm{\bullet}	& \bullet		& \mc{\circ}		& 			& 			& 			& 			\\ 
			\cline{1-6}
						& 			& 				& \cm{\circ}	& \bullet		& \mc{\bullet}	& 		 \\ 
						& 			& 				& \cm{\bullet}	& \circ		& \mc{\bullet}	& 		 \\ 
						& 			& 				& \cm{\bullet}	& \bullet		& \mc{\circ}	& 		 \\ 
			\cline{4-6}
						& 			& 				& 			& 			& 			& \ddots  
			\end{array} \right).
		\eeq
		The blocks are of size $\dim(\HR) \times \dim(\HR)$.
\end{enumerate}
As in the discussion of local completability, we would like to emphasize the additive ``approximate'' in the names of the different scenarios. It stresses that it is crucial to make sure that our a priori knowledge has to correspond to \emph{generic} state-measurement configurations. For instance, if we enforce exact projective measurements, then the underlying state-measurement configuration are idealized and cannot be treated as generic configurations. Thus, when considering idealized knowledge about $G$ it is important to run the test for global completability for the considered special case. In fact, the scenarios ``approximate pure states'' and ``approximate projective measurements with unknown degeneracies'' tend to become unstable when we enforce perfect pureness and perfect projectiveness respectively.

We arrive at the Tables~\ref{Table:Global.rigidity:Scenario.pure.states} to~\ref{Table:Global.rigidity:Scenario.projective.measurements.with.unknown.degeneracies} summarizing our findings for the different scenarios. Calculations were performed in Matlab using the rank function. Here, some caution is advised because this direct numerical computation of the rank of very large matrices is prone to numerical errors.

\begin{table}[htdp]
\caption{Scenario (I), i.e.,  ``\emph{approximate pure states}''. Here, $K=\dim(\HR)$.}
\begin{center}
\begin{tabular}{|c|c|c|c|}
\hline  
$\dim(\HR)$		& global completability?			& $W$			& $V$		 \\
\hline \hline
2				& yes					& 10					& 4			\\  
3				& yes					& 45					& 5			\\
4				& yes					& 136				& 6				\\
5				& yes					& 325				& 7				\\
6				& yes					& 666				& 8				\\
\hline
\end{tabular}
\end{center}
\label{Table:Global.rigidity:Scenario.pure.states}
\end{table}%

\begin{table}[htdp]
\caption{Scenario (II), i.e., ``\emph{approximate projective measurements with known degeneracies}''. Here, $K=\dim(\HR)$.}
\begin{center}
\begin{tabular}{|c|c|c|c|}
\hline  
$\dim(\HR)$		& global completability?			& $W$			& $V$ 		 \\
\hline \hline
2				& yes					& 4					& 10				\\
3				& yes					& 9					& 15				\\
4				& yes					& 16					& 23					\\
5				& yes					& 25					& 33					\\
6				& yes					& 36					& 45					\\
\hline
\end{tabular}
\end{center}
\label{Table:Global.rigidity:Scenario.projective.measurements.with.known.degeneracies}
\end{table}%

\begin{table}[htdp]
\caption{Scenario (III), i.e., ``\emph{approximate projective measurements with unknown degeneracies}''. Here, $K=\dim(\HR)$.}
\begin{center}
\begin{tabular}{|c|c|c|c|}
\hline  
$\dim(\HR)$		& global completability?				& $W$			& $V$		 \\
\hline \hline
3				& yes						& 9					& 45				\\
4				& yes						& 16					& 46					\\
5				& yes						& 25					& 55					\\
6				& yes						& 36					& 67					\\
\hline
\end{tabular}
\end{center}
\label{Table:Global.rigidity:Scenario.projective.measurements.with.unknown.degeneracies}
\end{table}%

\section{Conclusions}\label{Sec:Conclusions}

We investigated the completability of vectors $P = (\vec{p}_{1}| \cdots |\vec{p}_{N}) \in \AR^{D \times N}$ fixing the value of pairwise inner products $\vec{p}_{i}^T \vec{p}_{j}$ for some $i,j \in \{ 1, ..., N \}$. In the first part of the paper we derived a Theorem analogous to the so called Asimow-Roth Theorem from conventional rigidity theory. We saw that local completability of vectors with inner product constraints is a universal feature in the sense that either almost all state-measurement configurations $P$ are locally completable or they are not. This lead to Algorithm~\ref{flexibility.test} to probe whether or not (for given $\Omega$) almost all configurations $P$ are locally completable. Moreover, the universal nature of local completability allowed for the computation of completability phase diagrams (see Figures~\ref{fig:2D.rigidity.diagram.pure.states} to~\ref{fig:4D.rigidity.diagram.all.lengths.proj.non.deg}).

Global completability is more difficult to analyze than local completability because we cannot work in local neighborhoods of configurations $P$. However, in the practically relevant special cases where we either know something about $G_{\mathrm{st}}$ or $G_{\mathrm{m}}$, global completability becomes easy to investigate. For these circumstances we have derived a necessary and sufficient condition for global completability. It is an important open problem to determine how simultaneous knowledge about $G_{\mathrm{st}}$ and $G_{\mathrm{m}}$ can be taken into account. In contrast to the criterion derived by R.~Connelly and S.~J.~Gortler \emph{et al.} in the context of conventional rigidity theory (see~\cite{connelly2005generic,gortler2010characterizing}), this condition is not only applicable to generic configurations $P$. We showed that satisfaction respectively violation of our condition is universal in the sense that either almost all configurations $P$ satisfy or violate the condition for global completability. As before, this lead to the randomized Algorithm~\ref{Global.rigidity:Algorithm.for.a.priori.test} to probe generic global completability.

\subsection*{Acknowledgements}I want to thank Johan {\AA}berg for his continuous support throughout my time as a PhD student. I count myself very lucky having had the opportunity for discussions with Matthias Baur, Matthias Christandl, Dejan Dukaric, Hamza and Omar Fawzi, Fr\'ed\'eric Dupuis, Philippe Faist, David Gross, Patrick Pletscher, Renato Renner, Lars Steffen, L\'idia del Rio, David Sutter, Michael Walter, Sharon Wulff, and M\'ario Ziman. I acknowledge support from the Swiss National Science Foundation through the National Centre of Competence in Research ``Quantum Science and Technology''.

%
%

\section*{Appendix: Proof of Theorem~\ref{Theorem.local.rigidity}}\label{Section:appendix}

\begin{lemma}\label{Appendix:Lemma:N.tilde.is.submanifold}
	Let $P \in \AR^{DÊ\times N}$ such that $\rank(P) = D$. Then, $\tilde{N}_{P}$ is a smooth $\frac{1}{2}D(D-1)$-dimensional submanifold of $\AR^{DÊ\times N}$.
\end{lemma}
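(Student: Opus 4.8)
The plan is to realize $\tilde{N}_{P}$ as the image of the orbit map
\[
	\phi: O(D) \to \AR^{D \times N}, \quad O \mapsto OP,
\]
and to show that $\phi$ is a smooth embedding. Since $O(D)$ is a compact Lie group of dimension $\frac{1}{2}D(D-1)$, this at once yields that $\tilde{N}_{P} = \phi(O(D))$ is an embedded submanifold of $\AR^{D \times N}$ diffeomorphic to $O(D)$, and hence of the asserted dimension. The map $\phi$ is manifestly smooth, being the restriction to $O(D)$ of the linear map $O \mapsto OP$.

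First I would establish injectivity of $\phi$. The hypothesis $\rank(P) = D$ means that $P$ has full row rank, so $PP^T \in \AR^{D \times D}$ is invertible and $P^T(PP^T)^{-1}$ is a right inverse of $P$. If $O_{1}P = O_{2}P$, then $(O_{1}-O_{2})P = 0$; multiplying on the right by $P^T(PP^T)^{-1}$ gives $O_{1} = O_{2}$. (Equivalently, the stabilizer of $P$ under the action is trivial.)

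Next I would show that $\phi$ is an immersion. Recall that $T_{O}O(D) = \{ OS \mid S \in \Skew(\AR^{D}) \}$, where $\Skew(\AR^{D})$ is the $\frac{1}{2}D(D-1)$-dimensional space of skew-symmetric matrices. Differentiating $\phi$ along a curve through $O$ with velocity $OS$ yields $d\phi_{O}(OS) = OSP$. This vanishes if and only if $SP = 0$ (as $O$ is invertible), and the full-row-rank argument above (now applied to $S$ in place of $O_{1}-O_{2}$) forces $S = 0$. Hence $d\phi_{O}$ is injective for every $O$, so $\phi$ is an injective immersion.

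Finally, I would upgrade this to an embedding by invoking compactness: a continuous injective map from the compact space $O(D)$ into the Hausdorff space $\AR^{D \times N}$ is a homeomorphism onto its image, so $\phi$ is a topological embedding and therefore a smooth embedding. Consequently $\tilde{N}_{P}$ is a smooth embedded submanifold diffeomorphic to $O(D)$. The only genuinely delicate point is precisely this passage from \emph{injective immersion} to \emph{embedded submanifold}: an injective immersion need not have embedded image in general, and it is the compactness of $O(D)$ — together with the rank condition ensuring both injectivity of $\phi$ and injectivity of each $d\phi_{O}$ — that makes the argument go through.
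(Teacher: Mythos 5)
Your proof is correct, and it takes a more self-contained route than the paper's. Both arguments ultimately rest on the same two facts — the stabilizer of a full-rank $P$ under left multiplication by $O(D)$ is trivial, and $O(D)$ is compact — but the paper packages them into the orbit theorem for compact Lie group actions (Corollary~21.6 and Proposition~21.7 in Lee), which requires it to first verify that the set $\mathcal{F}$ of full-rank matrices is open (hence a manifold on which $O(D)$ acts) and to prove triviality of the stabilizer via the singular value decomposition; it then still needs a separate, somewhat laborious tangent-space computation (again via the SVD and the skew-symmetric generators of $SO(D)$) to obtain the dimension $\frac{1}{2}D(D-1)$. You instead prove the relevant special case of that orbit theorem directly: the orbit map $\phi(O) = OP$ is an injective immersion — both the injectivity of $\phi$ and the injectivity of each $d\phi_{O}$ reduce to the single observation that $P$ has the right inverse $P^{T}(PP^{T})^{-1}$, which is cleaner than the paper's SVD manipulations — and compactness of $O(D)$ upgrades the injective immersion to an embedding. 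This buys you two things: no appeal to the quotient/orbit machinery, and the dimension count for free, since the orbit is diffeomorphic to $O(D)$ itself. You also correctly isolate the one delicate point (an injective immersion need not have embedded image in general), which is exactly the gap that compactness closes; the paper delegates this same issue to the cited theorem.
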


\begin{proof}
	Denote by $\mathcal{F}$ the set of full-rank matrices in $\AR^{D \times N}$. Hence, $P \in \mathcal{F}$. To prove that $\tilde{N}_{P}$ is a submanifold we are going to apply Lemma~\ref{Lee.proposition.21.7} below in the form $G \mapsto O(D)$, $M \mapsto \mathcal{F}$. For that purpose we need to check that all the conditions for the application of Lemma~\ref{Lee.proposition.21.7} are satisfied. First, we show that $\mathcal{F}$ is a smooth manifold by showing that $\mathcal{F}$ is an open set in $\AR^{D \times N}$ (see Proposition~5.3 in~\cite{Lee2013}). Then, we will prove that the stabilizers $O(D)_{P}$ (see Lemma~\ref{Lee.proposition.21.7}) in the orthogonal group are trivial, i.e., $O(D)_{P} = \{ \id \}$ for all $P \in \mathcal{F}$.
	
	The rank of a matrix is equal to the largest non-zero minor. Therefore, $Q \in \mathcal{F}$ if and only if there exists a submatrix $A \in \AR^{D \times D}$ of $Q$ such that $\det(A) \neq 0$. Enumerate all possible $(D \times D)$-submatrices of $Q \in \AR^{D \times N}$ by the index $\kappa = 1,2,...$, set $Q_{\kappa} \in \AR^{D \times D}$ to be the restriction of $Q$ to the submatrix index by $\kappa$, and define
	\[
		\mathcal{M}_{\kappa} := \{ Q \in \AR^{D \times N} \; | \; \det(Q_{\kappa}) \neq 0 \}.
	\]
	The determinant is a continuous function and $\{ 0 \}$ is closed in $\AR$. Hence, all the sets $\mathcal{M}_{\kappa}$ are open sets because their complements are closed. The full-rank set is the union of all the sets $\mathcal{M}_{\kappa}$ and therefore open.
	
	Let $P = USV^*$ be the singular value decomposition of $P$ ($U \in O(D)$, $V \in O(N)$). Let $T \in O(D)$ be an orthogonal matrix. To prove $O(D)_{P} = \{ \id \}$ we show that $TP = P \; \Rightarrow T = \id$. The l.h.s is equivalent to $TUSV^* = USV^*$. Set $\hat{S} \in \AR^{D \times D}$ such that $S = [ \hat{S} | 0_{D \times (N-D)} ]$. Note that $S V^* = \hat{S} \hat{V}^*$ where $\hat{V}^* \in \AR^{D \times N}$ consists out of the first $D$ columns of $V^*$. Therefore,
	\beq\begin{aligned}\nn
		&TUSV^* = USV^*
		\Rightarrow \
		TU\hat{S} \hat{V}^* = U\hat{S} \hat{V}^*\\
		&\Rightarrow \
		TU\hat{S} = U\hat{S}
		\Rightarrow \
		T = \id.
	\end{aligned}\eeq
	 The second implication is a consequence of the right-action of $(\hat{V}^*)^*$. Note that $\hat{S} \in GL(\AR^{D})$ because $P$ is full rank. Thus, the last implication follows from the right-action of $(U\hat{S})^{-1}$. Consequently, Lemma~\ref{Lee.proposition.21.7} can be applied because the orthogonal group $O(D)$ is closed (it is the level set associated to $\id$ of the map $A \mapsto A^*A$) and bounded, and therefore compact.
	 
	 It is left to show that $\dim(\tilde{N}_{P}) = \frac{1}{2}D(D-1)$. The dimension of the manifold $\tilde{N}_{P}$ is equal to the dimension of its tangent space $T_{P}(\tilde{N}_{P})$ at $P$,
\[
	T_{P}(\tilde{N}_{P}) = \mathrm{span} \{ \dot{Q}(0) \; | \; Q(t) = \exp(\vec{\alpha(t)} \cdot \vec{I}) P \},
\]
with $\vec{\alpha}(\cdot) \subset \AR^{\dim(SO(D))}, \vec{\alpha}(0) = 0$ arbitrary. Here, $\vec{I}$ carries all the generators of $SO(D)$. Rewriting $P = U S V^*$, 
\beq\label{S.from.SVD.of.P}
	S = \left( \begin{array}{lll | lll}  s_{1} &  &  &  0  & \cdots  & 0 \\ & \ddots &   & \vdots & \ddots  & \vdots \\ &  & s_{R} & 0 & \cdots  & 0  \end{array} \right) \in \AR^{D \times N},
\eeq
in terms of its singular value decomposition and using that the product of orthogonal matrices is an orthogonal matrix, we arrive at
\[
	T_{P}(\tilde{N}_{P}) = \mathrm{span} \{ \dot{W}(0) \; | \; W(t) = \exp(\vec{\beta(t)} \cdot \vec{I}) SV^* \},
\]
with $\vec{\beta}(\cdot) \subset \AR^{\dim(SO(D))}, \vec{\beta}(0) = 0$ arbitrary. Taking the derivative, we get $\dot{W}(0) = (\dot{\vec{\beta}}(0) \cdot \vec{I}) P$. The matrix $\dot{\vec{\beta}}(0) \cdot \vec{I}$ is an arbitrary skew-symmetric matrix because $\dot{\vec{\beta}}(0) \in  \AR^{\dim(SO(D))}$ can be arbitrary. Denote the linear subspace of skew-symmetric $(D\times D)$-matrices by $\mathcal{A} \subset \AR^{D \times D}$. We conclude that $\dim(\tilde{N}) = \dim \; \mathcal{A} S V^*$. The right-action of $S$ stretches $\mathcal{A}$ and embeds it into $\AR^{D \times N}$. The action of $V^*$ then simply turns the rows of the resulting matrices according to $V$. Consequently,
\[
	\dim(\tilde{N}) = \dim(\mathcal{A} S V^*) = \dim(\mathcal{A}) = \frac{1}{2} D (D-1).
\]

\end{proof}

\begin{lemma}\label{Lee.proposition.21.7}
	Let $\cdot: \; G \times M \rightarrow M$, $(g,P) \mapsto g \cdot P$ be a smooth left action of a compact Lie group $G$ on an smooth manifold $M$,  $P \in M$, and
	\[
		G_{P} = \{ g \in G \; | \; g \cdot P = P \}
	\]
	be the stabilizer of $P$. Suppose the stabilizer of $P$ is equal to the identity, i.e., $G_{P} = \{ e \}$. Then, the orbit $G \cdot P \subseteq M$ is a smooth submanifold of $M$.
\end{lemma}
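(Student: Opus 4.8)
The plan is to study the \emph{orbit map}
\[
	\theta_{P} : G \to M, \qquad g \mapsto g \cdot P,
\]
whose image is precisely the orbit $G \cdot P$, and to prove that $\theta_{P}$ is a smooth embedding; since the image of an embedding is by definition an embedded submanifold of $M$, this gives the claim. First I would record that $\theta_{P}$ is smooth, being the composition of $g \mapsto (g,P)$ with the smooth action map, and that it is \emph{equivariant} with respect to left translation on the source and the given action on the target: for $h,g \in G$,
\[
	\theta_{P}(hg) = (hg)\cdot P = h \cdot (g \cdot P) = h \cdot \theta_{P}(g).
\]

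Because $G$ acts on itself transitively by left translation, the equivariant rank theorem (see~\cite{Lee2013}) applies and shows that $\theta_{P}$ has \emph{constant rank} on all of $G$. Next I would invoke the hypothesis $G_{P} = \{ e \}$ to deduce that $\theta_{P}$ is injective: if $\theta_{P}(g) = \theta_{P}(h)$ then $(h^{-1}g)\cdot P = P$, so $h^{-1}g \in G_{P} = \{ e \}$, whence $g = h$. A smooth map of constant rank that is injective is necessarily an immersion (the global rank theorem, see~\cite{Lee2013}), so at this stage $\theta_{P}$ is established to be an \emph{injective immersion}.

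The final and most delicate step is upgrading this injective immersion to an embedding. In general an injective immersion need not be an embedding (for instance, a dense line winding around a torus), so genuine use of the hypotheses is required precisely here. This is where the compactness of $G$ enters: since $G$ is compact, $\theta_{P}$ is a proper map, because the preimage of any compact set is a closed subset of the compact space $G$ and hence compact, and a proper injective immersion is a smooth embedding. Equivalently, an injective immersion defined on a compact manifold is automatically an embedding. Therefore $\theta_{P}$ is an embedding, and its image $G \cdot P$ is an embedded submanifold of $M$.

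I expect the passage from immersion to embedding to be the main obstacle: the constant-rank and injectivity steps are essentially formal consequences of equivariance and the triviality of the stabilizer, whereas ruling out pathological, non-embedded orbits is exactly what properness (furnished by compactness of $G$) is needed for. One should also double-check that the constant value of the rank equals $\dim G$, which is immediate from the injectivity of $\theta_{P}$ combined with the rank theorem, but this is a routine verification rather than a genuine difficulty.
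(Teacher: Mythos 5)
Your proof is correct, but it is worth pointing out that the paper does not actually contain an argument for this lemma: its ``proof'' is a one-line citation, declaring the statement to be a rewriting of Corollary~21.6 and Proposition~21.7 of~\cite{Lee2013}. What you have written is, in effect, a self-contained proof of precisely the content of those cited results: Corollary~21.6 of~\cite{Lee2013} is the fact that a continuous action of a compact Lie group is proper (your properness-from-compactness step), and Proposition~21.7 concerns the orbit map $\theta_{P}$ --- smoothness, equivariance, constant rank via the equivariant rank theorem, injectivity from triviality of the stabilizer, and the resulting (injective) immersion whose image is the orbit. So the mathematical route is the same; the difference is packaging. Your version buys self-containedness and makes visible exactly where each hypothesis enters (triviality of $G_{P}$ only at the injectivity step, compactness of $G$ only at the immersion-to-embedding step), whereas the paper buys brevity by outsourcing the entire chain to the reference. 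Two minor points you should make explicit if this is to stand alone: (i) your properness argument tacitly uses that $M$ is Hausdorff, so that compact subsets of $M$ are closed and their preimages are closed in the compact space $G$ (harmless, since manifolds are Hausdorff by convention, but it is the place where that convention is used); (ii) the final sentence relies on the standard fact that the image of a smooth embedding, with its subspace topology, is an embedded submanifold --- this is a proposition in~\cite{Lee2013}, not a definition, so it deserves its own citation rather than the phrase ``by definition.''
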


\begin{proof}
	This is simply a rewriting of Corollary~21.6 and Proposition~21.7 in~\cite{Lee2013}.
\end{proof}

Recall that a configuration $P \in \AR^{D \times N}$ is \emph{generic} if it is sampled from a probability measure satisfying~\eqref{Rigidity:the.abs.cont.assumption}. We represent our knowledge $G_{\Omega}(P)$ about the unknown states-measurements Gram matrix by a vector $\knowledge \in \AR^{| \Omega |}$, i.e., $G_{\Omega}(P) = \knowledge$. An open neighborhood $U_{P}$ of a matrix $P \in \AR^{D \times N}$ is an open set containing $P$. In the remainder of this section we are going to show that---for generic configurations $P$---there exists an open and full-measure neighborhood $U_{P} \subseteq \AR^{D \times N}$ of the true configuration $P$ such that $U_{P} \cap G_{\Omega}^{-1}(\knowledge)$ is a submanifold of $\AR^{D \times N}$ (see Theorem~\ref{main.theorem.about.local.rigidity}).

For this purpose we will apply the constant-rank level set Theorem. 

\begin{theorem}[Theorem~5.12 in~\cite{Lee2013}]\label{Constant.rank.level.set.Thm}
	Let $M$ and $N$ be smooth manifolds, and let $\Phi: M \rightarrow N$ be a smooth map with constant rank $r$. Each level set of $\Phi$ is a properly embedded submanifold of codimension $r$ in $M$.
\end{theorem}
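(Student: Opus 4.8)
The plan is to deduce the statement from the Rank Theorem (the constant-rank normal form) together with the local slice criterion for embedded submanifolds. Write $m = \dim M$ and fix a value $c \in N$ with level set $S := \Phi^{-1}(c)$. The guiding idea is that constant rank forces $\Phi$ to look, in suitable coordinates near every point, like a linear projection, and the preimage of a point under a projection is a coordinate slice; so the whole argument reduces to producing the right charts and recognizing the slice.

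First I would fix an arbitrary point $p \in S$ and invoke the Rank Theorem: since $\Phi$ has constant rank $r$ in a neighborhood of $p$, there exist smooth charts $(U, \varphi)$ centered at $p$ and $(V, \psi)$ centered at $c = \Phi(p)$ with $\Phi(U) \subseteq V$ such that the coordinate representation of $\Phi$ is
\[
\psi \circ \Phi \circ \varphi^{-1}(x^1, \ldots, x^r, x^{r+1}, \ldots, x^m) = (x^1, \ldots, x^r, 0, \ldots, 0).
\]
Because both charts are centered, $c$ corresponds to the origin in $\psi$-coordinates, so a point $q \in U$ lies in $S$ precisely when its first $r$ coordinates vanish. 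Hence $S \cap U = \{ q \in U : \varphi^1(q) = \cdots = \varphi^r(q) = 0 \}$, which is exactly an $(m-r)$-dimensional slice of $U$ (after the harmless reordering that moves the constrained coordinates to the end).

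Next I would appeal to the local slice criterion: a subset of $M$ meeting some chart around each of its points in a single $k$-slice is an embedded submanifold of dimension $k$, carrying the subspace topology and a canonical smooth structure. Applying this with $k = m - r$ at every $p \in S$ shows that $S$ is an embedded submanifold of codimension $r$; the dimension bookkeeping is automatic from the normal form, so no separate computation is needed. Finally, to upgrade \emph{embedded} to \emph{properly embedded}, I would note that $S = \Phi^{-1}(c)$ is closed in $M$, since singletons are closed in the Hausdorff manifold $N$ and $\Phi$ is continuous; a closed embedded submanifold is properly embedded, so the inclusion $S \hookrightarrow M$ is proper.

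The only genuinely substantive input is the Rank Theorem itself, and that is where the main obstacle is concentrated: it is the constant-rank normal form (ultimately a consequence of the inverse function theorem) that converts the hypothesis into the slice picture. Everything downstream — the slice criterion and the closedness argument for properness — is routine bookkeeping that I would take as known from the standard smooth-manifolds background, consistent with the fact that the statement is quoted here as Theorem~5.12 in~\cite{Lee2013}.
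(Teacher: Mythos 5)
Your proposal is correct, and it coincides with the proof this statement receives in its source: the paper does not prove this theorem at all but imports it verbatim as Theorem~5.12 of~\cite{Lee2013}, whose own argument is exactly your route (rank theorem to get the local normal form, the local slice criterion to conclude embeddedness of codimension $r$, and closedness of $\Phi^{-1}(c)$ to upgrade to properly embedded). Nothing is missing; citing the Rank Theorem and the slice criterion as known background is appropriate here, just as the paper treats the whole statement as known background.
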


We are going to apply Theorem~\ref{Constant.rank.level.set.Thm} in the form
\beq\begin{aligned}
	M	&\mapsto	U \text{ open ($\Rightarrow$ a submanifold)},\\
	N 	&\mapsto	\AR^{| \Omega |},\\
	\Phi	&\mapsto	G_{\Omega}.
\end{aligned}\eeq
Hence, we need to show that there exists an open neighborhood $U \subseteq \AR^{D \times N}$ of the true configuration $P$ such that $\rank(d(G_{\Omega})_{(\cdot)})$ is constant on all of $U$. For this purpose we will require the following result about the decomposition of semi-algebraic sets.

\begin{theorem}[Proposition~2.9.10 in~\cite{Bochnak1998}]\label{Thm.decomposition.into.Nash.mfds}
	Let $S \subset \AR^n$ be a semi-algebraic set. Then, $S$ is the disjoint union of (Nash) submanifolds $M_{i}$, each (Nash) diffeomorphic to the open cube $(0,1)^{\times d_{i}}$ ($d_{i} \leq n$).
\end{theorem}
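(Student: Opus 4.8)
The plan is to prove this by constructing a \emph{cylindrical algebraic decomposition} (CAD) of $\AR^n$ adapted to the finite family of polynomials that define $S$. Since any semi-algebraic $S \subseteq \AR^n$ is, by definition, a finite Boolean combination of sets of the form $\{ p > 0 \}$ and $\{ p = 0 \}$ for finitely many polynomials $p_{1}, \dots, p_{s}$, it suffices to partition $\AR^n$ into finitely many cells on each of which every $p_{\ell}$ has constant sign; then $S$ is exactly the union of those cells on which the defining sign conditions hold, and a decomposition of $S$ into the required pieces is obtained by keeping only the relevant cells. The work therefore reduces to building the cells and showing that each is Nash diffeomorphic to some open cube $(0,1)^{d_{i}}$.

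I would carry this out by induction on $n$. For $n=1$, the finitely many real zeros of $p_{1}, \dots, p_{s}$ cut $\AR$ into finitely many isolated points and open intervals; a point is a $0$-cube and each open (possibly unbounded) interval is Nash diffeomorphic to $(0,1)$, so the base case holds. For the inductive step I would first enlarge the family by adjoining the leading coefficients, the discriminants, and the pairwise resultants of the $p_{\ell}$ regarded as polynomials in the last variable $x_n$---this is the CAD projection operator---and apply the induction hypothesis to the resulting family in $\AR^{n-1}$, obtaining a cell decomposition of $\AR^{n-1}$ with each base cell $C$ Nash diffeomorphic to an open cube. The purpose of the projection operator is \emph{delineability}: over each $C$ the number of distinct real roots of each $p_{\ell}(\cdot, x_n)$ is constant and these roots are given by Nash functions $\xi_{1} < \dots < \xi_{m} : C \to \AR$, which follows from the implicit function theorem applied to the simple roots once the discriminants have constant sign on $C$.

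The roots $\xi_{1}, \dots, \xi_{m}$ then slice the cylinder $C \times \AR$ into \emph{sections} (the graphs $\{ x_n = \xi_{k} \}$) and \emph{sectors} (the bands $\{ \xi_{k} < x_n < \xi_{k+1} \}$, together with the two unbounded ones). On every such piece all the $p_{\ell}$ have constant sign, so this decomposition refines the sign data, as required. A section is the graph of a Nash map over $C$, hence Nash diffeomorphic to $C$; a sector is Nash diffeomorphic to $C \times (0,1)$ via a Nash reparametrization interpolating between its two bounding functions. Composing with the inductive Nash diffeomorphism $C \cong (0,1)^{d}$ yields that each new cell is Nash diffeomorphic to $(0,1)^{d}$ or to $(0,1)^{d+1}$, completing the induction; finiteness is immediate since each step produces only finitely many cells.

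The main obstacle is the delineability/regularity step: guaranteeing that over each base cell the real roots of the $p_{\ell}$ are genuinely \emph{Nash} (semialgebraic and real-analytic) functions, with no crossings and no vanishing of multiplicities. This is exactly what forces the adjunction of discriminants, resultants, and leading coefficients to the projection family, and it is where the implicit function theorem---valid for the simple roots, whose simplicity is certified by the non-vanishing of the discriminants on each cell---does the real work. The remaining verifications (constant sign on each cell, disjointness of the cells, and that their relevant union equals $S$) are then routine bookkeeping.
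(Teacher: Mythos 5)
The paper offers no proof of this statement at all: it is imported verbatim as Proposition~2.9.10 of Bochnak--Coste--Roy~\cite{Bochnak1998}, so the only thing to compare your attempt against is the proof in that reference --- and your sketch is essentially that proof, the standard cylindrical algebraic decomposition argument (induction on $n$, projection, delineability over each base cell, sections and sectors, Nash diffeomorphisms to cubes). One technical point needs repair, however. The projection family you propose (leading coefficients, discriminants, pairwise resultants) does not guarantee delineability in general: if some $p_{\ell}$ is not square-free as a polynomial in $x_{n}$, its discriminant vanishes identically and contributes nothing. For example, $p(x_{1},x_{2}) = (x_{2}^{2}-x_{1})^{2}$ has discriminant $\equiv 0$ with respect to $x_{2}$, yet its number of distinct real roots changes at $x_{1}=0$, a boundary your projection would fail to create; moreover the implicit function theorem never applies, since every root of $p$ is a double root. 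The standard fix is to first replace each $p_{\ell}$ by its square-free part with respect to $x_{n}$ (the sign data is unchanged), or equivalently to adjoin the principal subresultant coefficients as in Collins' full projection operator and in the cited book; with that amendment the simple-root/implicit-function-theorem step is legitimate and the rest of your induction goes through.
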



\subsubsection{Existence of the open neighborhood $U_{P}$}

For $\kappa \in \mathbb{N}$, let  
\[
	\mathcal{R}_{\kappa} := \{ Q \in \AR^{D \times N} \; | \; \rank \left(  d(G_{\Omega})(Q)  \right) = \kappa  \}.
\]
The rank of a matrix is the size of the largest non-zero minor. This motivates the definition of the polynomial
\[
	\omega_{\kappa}(Q) := \sum_{\substack{A \subseteq  d(G_{\Omega})_{Q} \\ A \in \mathbb{R}^{\kappa \times \kappa}  }} \left( \det A \right)^2
\]
($A \subseteq d(G_{\Omega})_{Q}$ means that $A$ is a submatrix of $d(G_{\Omega})_{Q}$), and its associated zero set
\[
	\mathcal{N}_{\kappa} := \{ Q \in \AR^{D \times N} \; | \; \omega_{\kappa}(Q) = 0 \}.
\]
Here, $d(G_{\Omega})_{Q}$ denotes the Jacobian of the map $G_{\Omega}$ evaluated at $Q$. The index set $\Omega$ is a subset of $DN$ matrix indices. Hence, $| \Omega | \leq ND$. The rank of $d(G_{\Omega})_{Q}$ is upper bounded by $| \Omega |$ because it is a $(| \Omega | \times ND)$-matrix. Moreover,
\beq
	\mathcal{R}_{\kappa} = \left\{ \begin{array}{ll}  \mathcal{N}_{\kappa+1} \cap \Bigl( \AR^{D \times N} \setminus \mathcal{N}_{\kappa} \Bigr) & \forall \kappa < | \Omega |,   \\    \AR^{D \times N} \setminus \mathcal{N}_{| \Omega |} & \forall \kappa = | \Omega | .  \end{array} \right. 
\eeq

\begin{lemma}\label{N.kappa.is.semialgebraic}
	$\mathcal{N}_{\kappa} \subseteq \AR^{D \times N}$ is an algebraic set.
\end{lemma}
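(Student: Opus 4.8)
The plan is to reduce the statement to the elementary fact that the zero set of a single real polynomial is, by definition, an algebraic set. Concretely, I would show that $\omega_{\kappa}$ is a polynomial in the $DN$ entries of $Q$; then $\mathcal{N}_{\kappa} = \{ Q \in \AR^{D \times N} \; | \; \omega_{\kappa}(Q) = 0 \}$ is exactly the vanishing locus of one polynomial, and is therefore an algebraic set.

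First I would unwind the polynomial dependence of the Jacobian on $Q$. Each component of the map $G_{\Omega}$ attached to an index $(i,j) \in \Omega$ is
\[
	(Q^{T}Q)_{ij} = \sum_{\ell=1}^{D} Q_{\ell i} Q_{\ell j},
\]
a homogeneous quadratic in the entries of $Q$. Hence every entry of the Jacobian $d(G_{\Omega})_{Q}$, being a partial derivative $\partial (Q^{T}Q)_{ij} / \partial Q_{mn}$, is a polynomial of degree at most one in the entries of $Q$.

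Next I would propagate this through the determinant and the sum. Fix a $\kappa \times \kappa$ submatrix $A \subseteq d(G_{\Omega})_{Q}$. By the Leibniz expansion, $\det A$ is a polynomial in the entries of $A$, each of which is a polynomial in $Q$ by the previous step; thus $\det A$, and therefore $(\det A)^{2}$, is a polynomial in the entries of $Q$. Since the number of $\kappa \times \kappa$ submatrices of the fixed-size matrix $d(G_{\Omega})_{Q}$ is finite, the sum $\omega_{\kappa}(Q) = \sum_{A} (\det A)^{2}$ is again a polynomial in the entries of $Q$.

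This finishes the argument: $\mathcal{N}_{\kappa}$ is the zero set of the polynomial $\omega_{\kappa}$ and hence an algebraic set. I do not expect any genuine obstacle here; the only points to watch are the bookkeeping that the partial derivatives of the quadratic components are honestly affine-linear in $Q$ (so the Jacobian entries are polynomial rather than merely rational), and that the indexing $A \subseteq d(G_{\Omega})_{Q}$ ranges over a fixed finite family of submatrix positions independent of $Q$, so that the finite sum defining $\omega_{\kappa}$ remains polynomial throughout $\AR^{D \times N}$.
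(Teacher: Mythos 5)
Your proof is correct and follows essentially the same route as the paper: both arguments observe that the entries of $G_{\Omega}(Q)$ are quadratic polynomials in the entries of $Q$, that the Jacobian entries are therefore polynomial (you additionally note they are of degree at most one, which is a nice refinement the paper omits), that minors are polynomials in these entries, and that the finite sum of squared minors $\omega_{\kappa}$ is thus a polynomial whose zero set $\mathcal{N}_{\kappa}$ is by definition an algebraic set. No gap to report.
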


\begin{proof}
	It is very easy to see that $\omega_{\kappa}(Q)$ is a polynomial in $( Q_{ij} )_{ij}$: All entries in $G_{\Omega}(Q)$ are inner products between different columns of configurations $Q \in \AR^{D \times N}$. The entries of the Jacobian $d(G_{\Omega})_{Q}$ are partial derivatives of the polynomial entries of $G_{\Omega}(Q)$. Therefore, $\bigl( d(G_{\Omega})_{Q} \bigr)_{ij}$ is polynomial in the entries of $Q$. The determinant of a submatrix $A \subseteq d(G_{\Omega})_{Q}$ is a polynomial in the entries of $A$ which are polynomials themselves. Thus, $\det(A)$ is a polynomial for all $A \subseteq d(G_{\Omega})_{Q}$. It follows that $\omega_{\kappa}(Q)$ is a polynomial in the entries of $Q$. We conclude that $\mathcal{N}_{\kappa} = \{ \omega_{\kappa}(Q) = 0 \}$ is an algebraic set.
\end{proof}

\begin{lemma}\label{N.closed.zero-set.lemma}
	Either $\mathcal{N}_{\kappa}$ is a closed set having measure zero, or $\mathcal{N}_{\kappa} = \AR^{D \times N}$.
\end{lemma}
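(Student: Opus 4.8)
The plan is to separate the statement into the two disjoint alternatives according to whether the polynomial $\omega_{\kappa}$ vanishes identically on $\AR^{D \times N}$. Before that, I would dispose of the topological half of the claim once and for all: by Lemma~\ref{N.kappa.is.semialgebraic}, $\omega_{\kappa}$ is a polynomial in the entries of $Q$, hence continuous, so $\mathcal{N}_{\kappa} = \omega_{\kappa}^{-1}(\{0\})$ is the preimage of the closed set $\{0\} \subset \AR$ and is therefore closed. This shows that in \emph{either} alternative $\mathcal{N}_{\kappa}$ is closed, so the only remaining work is the measure statement.

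The easy alternative is $\omega_{\kappa} \equiv 0$: then every $Q$ satisfies $\omega_{\kappa}(Q) = 0$, so $\mathcal{N}_{\kappa} = \AR^{D \times N}$, which is the second case of the dichotomy. It remains to treat $\omega_{\kappa} \not\equiv 0$ and to show that then $\mathcal{N}_{\kappa}$ has Lebesgue measure zero. The key elementary fact I would invoke here is that a nonzero polynomial cannot vanish on a nonempty open subset of $\AR^{D \times N}$; consequently $\mathcal{N}_{\kappa}$ has empty interior.

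To pass from ``empty interior'' to ``measure zero'' I would apply Theorem~\ref{Thm.decomposition.into.Nash.mfds} to the algebraic (hence semi-algebraic) set $\mathcal{N}_{\kappa}$, obtaining a finite disjoint decomposition into Nash submanifolds $M_{i}$, each diffeomorphic to a cube $(0,1)^{\times d_{i}}$ with $d_{i} \leq DN$. If some $d_{i}$ equalled $DN$, the corresponding $M_{i}$ would be an open subset of $\AR^{D \times N}$ (a full-dimensional submanifold is open, e.g.\ by invariance of domain), contradicting the empty interior just established. Hence every $d_{i} < DN$. A smooth submanifold of $\AR^{D \times N}$ of dimension strictly below $DN$ is a Lebesgue null set, and a finite union of null sets is null, so $\mathcal{N}_{\kappa}$ has measure zero. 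Combining the two alternatives with the closedness already proved yields the Lemma.

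The argument is essentially routine, and I expect no serious obstacle; the only point deserving care is the claim that a nonzero polynomial has no open zero set. I would justify this either by a one-variable-at-a-time induction (a nonzero univariate polynomial has finitely many roots, and Fubini propagates the null-set property to several variables), or by observing that vanishing on a nonempty open set forces all Taylor coefficients of $\omega_{\kappa}$ to vanish there and hence $\omega_{\kappa} \equiv 0$. Invoking the semi-algebraic decomposition of Theorem~\ref{Thm.decomposition.into.Nash.mfds} keeps the measure bookkeeping clean and matches the machinery already set up in this appendix.
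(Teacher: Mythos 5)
Your proposal is correct and takes essentially the same route as the paper: closedness from continuity of the polynomial $\omega_{\kappa}$, the semi-algebraic decomposition of Theorem~\ref{Thm.decomposition.into.Nash.mfds} into Nash submanifolds, the fact that a polynomial vanishing on a nonempty open set vanishes identically, and nullity of submanifolds of dimension below $DN$ (which the paper justifies explicitly via Sard's theorem applied to the Nash diffeomorphisms). The only difference is organizational: you case-split on $\omega_{\kappa} \equiv 0$ versus $\omega_{\kappa} \not\equiv 0$ and rule out a full-dimensional piece by contraposition, whereas the paper splits on whether the decomposition contains a full-dimensional piece and then deduces $\omega_{\kappa} \equiv 0$ from the Taylor expansion on an open ball --- the same ingredients run in the opposite direction.
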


\begin{proof}

The set $\mathcal{N}_{\kappa}$ is the preimage of the closed set $\{ 0 \}$, and $\omega_{\kappa}$ is a continuous function. Therefore, $\mathcal{N}_{\kappa}$ is closed. Moreover, $\mathcal{N}_{\kappa}$ is semialgebraic (see Lemma~\ref{N.kappa.is.semialgebraic}). Therefore (see Theorem~\ref{Thm.decomposition.into.Nash.mfds}), $\mathcal{N}_{\kappa}$ is the disjoint union of finitely many (Nash) submanifolds $M_{i}$, 
\[
	\mathcal{N}_{\kappa} = \bigcup_{i} M_{i},
\]
and each $M_{i}$ is (Nash) diffeomorphic to an open cube $(0,1)^{\times \dim(M_{i})}$. Let $\varphi_{i}: \; M_{i} \rightarrow (0,1)^{\times \dim(M_{i})}$ denote the corresponding diffeomorphisms. In the remainder we are going to distinguish the two cases $\max_{i} \bigl( \dim(M_{i}) \bigr) = ND$, and $\max_{i} \bigl( \dim(M_{i}) \bigr) < ND$.

Assume $\max_{i} \bigl( \dim(M_{i}) \bigr) = ND$. Consequently, there exists an index $j$ such that $M_{j}$ is a smooth submanifold in $\AR^{D \times N}$ of codimension 0. Hence, it is an open set (see for instance Proposition~5.1 in~\cite{Lee2013}). Consequently, $\exists \varepsilon > 0$ and $E \in \mathcal{N}_{\kappa}$, such that $B_{\varepsilon}(E) \subseteq \mathcal{N}_{\kappa}$. Expanding the polynomial $\omega_{\kappa}(Q)$ around $E$, we arrive at
\beq\begin{aligned}
	\omega_{\kappa}(Q) 
	&=	\sum_{i_{11}, i_{21},....,i_{DN}=0}^{\deg(\omega_{\kappa})} a_{i_{11},i_{21},\cdots, i_{DN}} Q_{11}^{i_{11}} Q_{21}^{i_{21}} \cdots Q_{DN}^{i_{DN}}\\
	&=	\sum_{i_{11}, i_{21},....,i_{DN}=0}^{\deg(\omega_{\kappa})} \tilde{a}_{i_{11},i_{21},\cdots, i_{DN}} (Q_{11}-E_{11})^{i_{11}}  \cdots (Q_{DN}-E_{DN})^{i_{DN}}.\nn
\end{aligned}\eeq
Setting $Q=E$, we find
\[
	0 = \omega_{\kappa}(E) =   \tilde{a}_{0, \cdots, 0}
\]
because by construction, $\omega_{\kappa}$ vanishes on all of $B_{\varepsilon}(E)$. Analogously ($\omega_{\kappa}$ is constant on $B_{\varepsilon}(E)$),
\[
	0 = \left. \frac{\partial}{\partial Q_{11}^{i_{11}} \cdots \partial Q_{DN}^{i_{DN} }} \omega_{\kappa}(Q) \right|_{Q=E} = \tilde{a}_{i_{11},\cdots,i_{DN}}
\]
for all $(i_{11}, ..., i_{DN})$. Hence, $\omega_{\kappa}(Q)$ is equal to zero on all of $\AR^{D \times N}$, and we conclude $\mathcal{N}_{\kappa} = \AR^{D \times N}$.

Assume $\max_{i} \bigl( \dim(M_{i}) \bigr) < ND$. The map $\varphi_{i}^{-1}$ restricted to $(0,1)^{\times \dim(M_{i})}$ is a smooth map from the $\dim(M_{i})$-dimensional manifold $(0,1)^{\times \dim(M_{i})}$ to the linear manifold $\AR^{D \times N}$. By assumption $\dim(M_{i}) < DN$. Therefore, $\rank( d(\varphi_{i}^{-1})_{Z} ) \leq \min\{ \dim(M_{i}),DN \} < DN$ for all $Z \in (0,1)^{\times \dim(M_{i})}$, i.e., each element of $M_{i}$ is a critical value. The set of critical values is a set of measure zero (Sard's Theorem). We conclude that $\mathcal{N}_{\kappa}$ is a set of measure zero because it is the finite union of the measure zero sets $M_{i}$.

\end{proof}

\begin{lemma}\label{inclusions.of.N.kappa.lemma}
	$\mathcal{N}_{\kappa_{2}} \subseteq \mathcal{N}_{\kappa_{1}}$ for all $\kappa_{2} < \kappa_{1}$.
\end{lemma}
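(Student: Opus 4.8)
The plan is to reinterpret the algebraic sets $\mathcal{N}_{\kappa}$ in terms of the rank of the Jacobian $d(G_{\Omega})_{Q}$ and then invoke the standard characterization of matrix rank via minors. First I would record the observation that, by construction, $\omega_{\kappa}(Q) = 0$ holds precisely when every $\kappa \times \kappa$ submatrix $A \subseteq d(G_{\Omega})_{Q}$ satisfies $\det A = 0$: indeed $\omega_{\kappa}$ is a finite sum of squares of these determinants, and a sum of squares of real numbers vanishes if and only if each summand vanishes. Hence
\[
	\mathcal{N}_{\kappa} = \{ Q \in \AR^{D \times N} \mid \rank(d(G_{\Omega})_{Q}) < \kappa \},
\]
using that the rank of a matrix equals the size of its largest nonvanishing minor (the very fact already exploited in Lemma~\ref{Appendix:Lemma:N.tilde.is.submanifold}).

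With this reformulation the inclusion becomes immediate. Fix integers $\kappa_{2} < \kappa_{1}$ and let $Q \in \mathcal{N}_{\kappa_{2}}$. Then $\rank(d(G_{\Omega})_{Q}) < \kappa_{2}$, and since $\kappa_{2} < \kappa_{1}$ this forces $\rank(d(G_{\Omega})_{Q}) < \kappa_{1}$, i.e.\ $Q \in \mathcal{N}_{\kappa_{1}}$. Therefore $\mathcal{N}_{\kappa_{2}} \subseteq \mathcal{N}_{\kappa_{1}}$, as claimed.

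If one prefers to avoid the rank reformulation and argue purely at the level of the polynomials, the same conclusion follows from the elementary fact that the vanishing of all $\kappa_{2} \times \kappa_{2}$ minors of a matrix entails the vanishing of all $\kappa_{1} \times \kappa_{1}$ minors whenever $\kappa_{1} > \kappa_{2}$. Indeed, any $\kappa_{1} \times \kappa_{1}$ submatrix of $d(G_{\Omega})_{Q}$ contains $\kappa_{2} \times \kappa_{2}$ submatrices, all of which are singular by hypothesis, so the $\kappa_{1} \times \kappa_{1}$ submatrix has rank at most $\kappa_{2} - 1 < \kappa_{1}$ and is itself singular. Consequently $\omega_{\kappa_{2}}(Q) = 0$ implies $\omega_{\kappa_{1}}(Q) = 0$, which is the desired set inclusion written out in terms of the defining polynomials.

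I expect no genuine obstacle here. The single point requiring a moment of care is the equivalence between the sum-of-squares polynomial $\omega_{\kappa}$ vanishing at $Q$ and the rank of $d(G_{\Omega})_{Q}$ dropping below $\kappa$, which is exactly the minor characterization of rank; once that is in hand, the statement reduces to a one-line chain of inequalities among integers.
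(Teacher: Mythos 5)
Your proof is correct, and its overall structure matches the paper's: translate the vanishing of the sum-of-squares polynomial $\omega_{\kappa_2}$ into the vanishing of all $\kappa_2 \times \kappa_2$ minors, then show this forces the vanishing of all $\kappa_1 \times \kappa_1$ minors. The only difference lies in how that key implication is justified: the paper argues at the level of the minors themselves, writing each $\kappa_1 \times \kappa_1$ minor as a weighted sum of $\kappa_2 \times \kappa_2$ minors via iterated Laplace (cofactor) expansion, whereas you route the argument through the rank characterization of minors --- either globally ($\rank(d(G_{\Omega})_{Q}) < \kappa_2 < \kappa_1$) or submatrix-by-submatrix in your second variant. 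Both justifications are standard and airtight; the paper's Laplace-expansion argument is marginally more self-contained (it never invokes the rank--minor equivalence, only the expansion identity), while your rank reformulation makes the monotonicity in $\kappa$ completely transparent and is the cleaner statement to keep in mind, since the same reformulation is implicitly what makes the sets $\mathcal{R}_{\kappa}$ and $\mathcal{N}_{\kappa}$ fit together in Lemma~\ref{Lemma.about.existence.of.open.constant.rk.NH}.
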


\begin{proof}
	Let $\kappa_{2} < \kappa_{1}$ and $M \in \mathcal{N}_{\kappa_{2}} =  \{ \omega_{\kappa_{2}} = 0 \}$. Hence, all $(\kappa_{2} \times \kappa_{2})$-minors of $d(G_{\Omega})_{M}$ are equal to zero. It follows that all $(\kappa_{1} \times \kappa_{1})$-minors of $d(G_{\Omega})_{M}$ are equal to zero because they can be written as weighted sum of $(\kappa_{2} \times \kappa_{2})$-minors (Laplace expansion; recall $\kappa_{2} < \kappa_{1}$). Therefore, $\omega_{\kappa_{1}}(M) = 0$ so that $M \in \mathcal{N}_{\kappa_{1}}$.
\end{proof}

\begin{lemma}\label{Lemma.about.existence.of.open.constant.rk.NH}
	Let $P \in \AR^{D \times N}$ be a generic states-measurements configuration. Then, with probability 1, there exists an open neighborhood $U_{P} \subseteq \AR^{D \times N}$ of $P$ and $r \in \{ 1,..., | \Omega | \}$ such that $\rank(d(G_{\Omega})_{Q}) = r$ for all $Q \in U_{P}$. Moreover, $U_{P}$ is full measure.
\end{lemma}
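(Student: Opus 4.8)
The plan is to let $r$ be the maximal rank attained by the Jacobian over the whole space, $r := \max_{Q \in \AR^{D \times N}} \rank\bigl(d(G_{\Omega})_{Q}\bigr)$, and to show that the \emph{single} top stratum $\mathcal{R}_{r}$ serves as the required neighborhood $U_{P}$ for every generic $P$ at once. The entire point is that the locus where the rank drops below its generic value $r$ is exactly the proper algebraic set $\mathcal{N}_{r}$, which by Lemma~\ref{N.closed.zero-set.lemma} is forced to be closed and of Lebesgue measure zero; everything else is bookkeeping with the sets already set up.

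First I would record two facts about $\mathcal{N}_{\kappa} = \{ \omega_{\kappa} = 0 \}$, which is the set on which all $\kappa \times \kappa$ minors of $d(G_{\Omega})$ vanish, i.e.\ where $\rank(d(G_{\Omega})_{Q}) < \kappa$. Since no $Q$ has rank exceeding $r$, every $(r+1)\times(r+1)$ minor vanishes identically, so $\mathcal{N}_{r+1} = \AR^{D \times N}$. On the other hand $r$ is attained at some point $Q_{0}$, where some $r \times r$ minor is nonzero, so $Q_{0} \notin \mathcal{N}_{r}$ and hence $\mathcal{N}_{r} \neq \AR^{D \times N}$. Note also that $r \leq |\Omega|$ because $d(G_{\Omega})_{Q}$ is an $(|\Omega| \times ND)$-matrix, and $r \geq 1$ because $G_{\Omega}$ is non-constant, so $r \in \{1, \dots, |\Omega|\}$ as required.

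Next I would invoke Lemma~\ref{N.closed.zero-set.lemma}: since $\mathcal{N}_{r} \neq \AR^{D \times N}$, the only remaining alternative is that $\mathcal{N}_{r}$ is a closed set of Lebesgue measure zero. Feeding $\mathcal{N}_{r+1} = \AR^{D \times N}$ into the displayed identity for $\mathcal{R}_{\kappa}$ at $\kappa = r$ gives, in either case of that identity, $\mathcal{R}_{r} = \AR^{D \times N} \setminus \mathcal{N}_{r}$; being the complement of a closed measure-zero set, $\mathcal{R}_{r}$ is open and of full measure. Finally, because $P$ is generic, its law $\mu$ satisfies $\mu \ll \lambda$, so $\mu(\mathcal{R}_{r}) = 1$; thus with probability $1$ we have $P \in \mathcal{R}_{r}$, and setting $U_{P} := \mathcal{R}_{r}$ yields an open, full-measure neighborhood of $P$ on which $\rank(d(G_{\Omega})_{Q}) \equiv r$ is constant.

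The only genuine subtlety — and hence the main obstacle — is the dichotomy furnished by Lemma~\ref{N.closed.zero-set.lemma}, namely that a proper algebraic subset of $\AR^{D \times N}$ cannot carry positive Lebesgue measure; establishing this is exactly what required the semialgebraic stratification of Theorem~\ref{Thm.decomposition.into.Nash.mfds} together with Sard's theorem. Since that work is already in place, the remaining content of the present lemma is precisely the identification of the top rank stratum with the complement $\AR^{D \times N} \setminus \mathcal{N}_{r}$, after which openness, full measure, and the genericity conclusion follow immediately.
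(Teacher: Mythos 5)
Your proof is correct and rests on the same machinery as the paper's own argument: the algebraicity of the minor-vanishing sets $\mathcal{N}_{\kappa}$, the dichotomy of Lemma~\ref{N.closed.zero-set.lemma}, and the identification of $U_{P}$ as the complement of a proper algebraic set. The one genuine difference is organizational: by defining $r$ as the globally maximal rank of $d(G_{\Omega})$, you obtain $\mathcal{N}_{r+1} = \AR^{D \times N}$ and $\mathcal{N}_{r} \neq \AR^{D \times N}$ in one stroke, which collapses the paper's two-case analysis (whether or not some $\mathcal{N}_{k}$ equals $\AR^{D \times N}$, with $\tilde{k} := \min\{ k \mid \mathcal{N}_{k} = \AR^{D \times N} \}$ and $r = \tilde{k}-1$ in the second case) and renders the monotonicity result Lemma~\ref{inclusions.of.N.kappa.lemma} unnecessary. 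The paper's two cases correspond exactly to $r = |\Omega|$ and $r < |\Omega|$ in your parametrization, and in both arguments the resulting neighborhood is the same set $U_{P} = \AR^{D \times N} \setminus \mathcal{N}_{r} = \mathcal{R}_{r}$; your version is the more streamlined of the two, at the small price of having to justify separately (as you do) that $1 \leq r \leq |\Omega|$.
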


\begin{proof}

Lemma~\ref{inclusions.of.N.kappa.lemma} implies
\beq\label{rank.sets.as.Nk+1.minus.Nk}
	\mathcal{R}_{\kappa} = \left\{ \begin{array}{ll}  \mathcal{N}_{\kappa+1}  \setminus \mathcal{N}_{\kappa}  & \forall \kappa < | \Omega |,   \\    \AR^{D \times N} \setminus \mathcal{N}_{| \Omega |} & \forall \kappa = | \Omega | .  \end{array} \right. 
\eeq
According to Lemma~\ref{N.closed.zero-set.lemma}, $\mathcal{N}_{\kappa}$ is either a closed set of measure zero, or $\mathcal{N}_{\kappa} = \AR^{D \times N}$. 

Assume that there exists no $\kappa \in \{ 1,..., | \Omega | \}$, such that $\mathcal{N}_{\kappa}  = \AR^{D \times N}$. 
Then, all the sets $\mathcal{N}_{\kappa}$ are closed sets of measure zero. Hence, by Eq.~\eqref{rank.sets.as.Nk+1.minus.Nk},
\beq\label{rank.sets.in.terms.of.zero.sets}
	\mathcal{R}_{\kappa} = \left\{ \begin{array}{ll}  
	(\text{set of measure zero})  & \forall \kappa < | \Omega |,   \\    
	(\text{open set of full measure}) & \forall \kappa = | \Omega |. 
	\end{array} \right. 
\eeq
It follows that $\mathbb{P}[ P \in \mathcal{R}_{| \Omega |} ] = 1$ because $P$ is generic. We conclude that with probability 1, $U_{P} := \mathcal{R}_{| \Omega |}$ is an open neighborhood of $P$ with the desired properties. This concludes the proof for this particular scenario.

This leaves us with the opposite scenario: assume that there exists $k \in \{ 1,..., | \Omega | \}$, such that $\mathcal{N}_{k}  = \AR^{D \times N}$. Set 
\[
	\tilde{k} := \min\{ k | \mathcal{N}_{k} =  \AR^{D \times N}\}.
\]
Note that $\tilde{k} > 1$, because otherwise, $d(G_{\Omega})_{Q} = 0$ for all $Q \in \AR^{D \times N}$ (this is impossible because every $(i,j) \in \Omega$ leads to a non-vanishing row in the Jacobian $d(G_{\Omega})_{Q}$). Lemma~\ref{inclusions.of.N.kappa.lemma} implies
\beq
	\mathcal{N}_{\kappa} = \left\{ \begin{array}{ll}  
	(\text{closed set of measure zero})  & \forall \kappa < \tilde{k},   \\    
	\AR^{D \times N} & \forall \kappa \geq \tilde{k}. 
	\end{array} \right. 
\eeq	
Hence (see Eq.~\eqref{rank.sets.as.Nk+1.minus.Nk} and \eqref{rank.sets.in.terms.of.zero.sets}),
\beq\begin{aligned}
	\mathcal{R}_{1}
	&=	(\text{set of measure zero}) \setminus (\text{set of measure zero}),\\
	\vdots \
	&=	\ \vdots\\
	\mathcal{R}_{\tilde{k}-2}
	&=	(\text{set of measure zero}) \setminus (\text{set of measure zero}),\\
	\mathcal{R}_{\tilde{k}-1}
	&=	\AR^{D \times N} \setminus (\text{closed set of measure zero}),\\
	\mathcal{R}_{\tilde{k}}
	&=	\AR^{D \times N} \setminus \AR^{D \times N} = \emptyset, \\
	\vdots \
	&=	\ \vdots\\
	\mathcal{R}_{| \Omega |}
	&=	\AR^{D \times N} \setminus \AR^{D \times N} = \emptyset.
\end{aligned}\eeq
We observe that $\mathcal{R}_{\tilde{k}-1}$ is an open set of full measure. Hence, $\mathbb{P}[ P \in \mathcal{R}_{\tilde{k}-1} ] = 1$ because $P$ is generic. We conclude that with probability 1, $\mathcal{R}_{\tilde{k}-1}$ is an open neighborhood of $P$ with the desired properties. This concludes the proof of the Lemma.
\end{proof}

\begin{theorem}\label{main.theorem.about.local.rigidity}
	Let $P \in \AR^{D \times N}$ be generic and $\vec{\mathcal{K}} = G_{\Omega}(P)$. Then (with probability 1) there exists an open and full measure neighborhood $U_{P}$ of $P$ such that $U_{P} \cap G_{\Omega}^{-1}(\vec{\mathcal{K}})$ is a smooth $(DN - r)$-dimensional submanifold of $\AR^{D \times N}$.
\end{theorem}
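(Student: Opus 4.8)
The plan is to combine the two ingredients already assembled in this appendix: the local constant-rank result of Lemma~\ref{Lemma.about.existence.of.open.constant.rk.NH} and the constant-rank level set theorem (Theorem~\ref{Constant.rank.level.set.Thm}). The theorem is then essentially a formal consequence, since all of the analytic difficulty has been pushed into the construction of the constant-rank neighborhood.

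First I would invoke Lemma~\ref{Lemma.about.existence.of.open.constant.rk.NH}: because $P$ is generic, with probability 1 there exists an open and full-measure neighborhood $U_{P} \subseteq \AR^{D \times N}$ of $P$ together with an integer $r \in \{1, \dots, |\Omega|\}$ such that $\rank(d(G_{\Omega})_{Q}) = r$ for every $Q \in U_{P}$. Since $U_{P}$ is open in $\AR^{D \times N}$, it is itself a smooth manifold of dimension $DN$ (Proposition~5.3 in~\cite{Lee2013}). I would then restrict the map $G_{\Omega}$ to $U_{P}$. The restricted map $G_{\Omega}|_{U_{P}} : U_{P} \to \AR^{|\Omega|}$ is smooth—its components are polynomials in the entries of $Q$—and, by the previous step, it has constant rank $r$ on all of $U_{P}$.

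Next I would apply Theorem~\ref{Constant.rank.level.set.Thm} in the form $M \mapsto U_{P}$, $N \mapsto \AR^{|\Omega|}$, $\Phi \mapsto G_{\Omega}|_{U_{P}}$. This yields that every level set of $G_{\Omega}|_{U_{P}}$ is a properly embedded submanifold of codimension $r$ in $U_{P}$. In particular the level set through $P$,
\[
    (G_{\Omega}|_{U_{P}})^{-1}(\knowledge) = U_{P} \cap G_{\Omega}^{-1}(\knowledge),
\]
is a properly embedded submanifold of $U_{P}$ of codimension $r$, hence of dimension $DN - r$. It contains $P$ because $\knowledge = G_{\Omega}(P)$, so the statement is nonvacuous, and $U_{P}$ is open and full measure by Lemma~\ref{Lemma.about.existence.of.open.constant.rk.NH}, giving exactly the claimed conclusion.

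The only substantive obstacle—establishing that the rank of the Jacobian is \emph{locally constant} on a full-measure neighborhood—has already been overcome in Lemma~\ref{Lemma.about.existence.of.open.constant.rk.NH}, which itself rests on the semialgebraic decomposition of the rank sets (Theorem~\ref{Thm.decomposition.into.Nash.mfds}) and on Sard's theorem to rule out the degenerate stratification. Given that, the remaining work here is purely formal bookkeeping: verifying that $U_{P}$ is genuinely a manifold via openness, that the restriction $G_{\Omega}|_{U_{P}}$ inherits the constant rank, and that codimension $r$ in ambient dimension $DN$ produces the asserted dimension $DN - r$.
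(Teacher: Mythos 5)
Your proposal is correct and follows essentially the same route as the paper's own proof: invoke Lemma~\ref{Lemma.about.existence.of.open.constant.rk.NH} to get the open, full-measure, constant-rank neighborhood $U_{P}$, then apply the constant-rank level set theorem (Theorem~\ref{Constant.rank.level.set.Thm}) with $M \mapsto U_{P}$, $N \mapsto \AR^{|\Omega|}$, $\Phi \mapsto G_{\Omega}|_{U_{P}}$. Your write-up is in fact slightly more explicit than the paper's, spelling out the codimension-to-dimension bookkeeping and that $P$ lies in the level set.
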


\begin{proof}
	According to Lemma~\ref{Lemma.about.existence.of.open.constant.rk.NH} there exists an open, full measure neighborhood $U_{P}$ of $P$ and $r \in \{ 1,..., | \Omega | \}$ such that $\rank(d(G_{\Omega})_{Q}) = r$ for all $Q \in U_{P}$. Therefore, $U_{P}$ is a submanifold of $\AR^{D \times N}$. Consequently, $G_{\Omega}$ is a smooth map with constant rank $r$ which maps the smooth manifold $U_{P}$ into the smooth manifold $\AR^{| \Omega |}$. This shows that  the constant rank level set Theorem (see Theorem~\ref{Constant.rank.level.set.Thm}) is applicable.
\end{proof}



%

\end{document}